\def\mdseries@tt{m}             
\newcommand{\removelatexerror}{\let\@latex@error\@gobble}
\let\@authorsaddresses\@empty
\newcommand{\mynote}[3]{
		\fbox{\bfseries\sffamily\scriptsize#1}
		{\small$\blacktriangleright$\textsf{\emph{\color{#3}{#2}}}$\blacktriangleleft$}}}
\newcommand{\mynote}[3]{}}
\newcommand{\nr}[1]{\mynote{Norman}{#1}{red}}
\newcommand{\out}[1]{}
\def\id{\text{id}}
\newcommand{\emphdef}[1]{{#1}\xspace}
\newcommand{\obj}[1]{\ensuremath{\textit{obj}({#1})}\xspace}
\newcommand{\morph}[1]{\ensuremath{\textit{morph}({#1})}\xspace}
\newcommand{\Src}{\ensuremath{\textit{src}}\xspace}
\newcommand{\src}[1]{\ensuremath{\textit{src}({#1})}\xspace}
\newcommand{\Tgt}{\ensuremath{\textit{tgt}}\xspace}
\newcommand{\tgt}[1]{\ensuremath{\textit{tgt}({#1})}\xspace}
\newcommand{\catH}{\ensuremath{\mathscr{H}}\xspace}
\newcommand{\catS}{\ensuremath{\mathscr{S}}\xspace}
\newcommand{\catC}{\ensuremath{\mathscr{C}}\xspace}
\newcommand{\freeCat}[1]{\ensuremath{\mathscr{F}({#1})}\xspace}
\newcommand{\CCat}[2]{\ensuremath{\mathscr{C}_{#1}({#2})}\xspace}
\newcommand{\calO}{\ensuremath{\mathcal{O}}\xspace}
\newcommand{\calM}{\ensuremath{\mathscr{M}}\xspace}
\newcommand{\xs}{\ensuremath{\textit{xs}}\xspace}
\newcommand{\ys}{\ensuremath{\textit{ys}}\xspace}
\newcommand{\zs}{\ensuremath{\textit{zs}}\xspace}
\newcommand{\as}{\ensuremath{\textit{as}}\xspace}
\newcommand{\bs}{\ensuremath{\textit{bs}}\xspace}
\newcommand{\cs}{\ensuremath{\textit{cs}}\xspace}
\newcommand{\ds}{\ensuremath{\textit{ds}}\xspace}
\newcommand{\flags}{\ensuremath{\textit{flags}}\xspace}
\newcommand{\smap}{\ensuremath{\textit{smap}}\xspace}
\newcommand{\map}{\ensuremath{\textit{map}}\xspace}
\newcommand{\plet}{\ensuremath{\text{let }}\xspace}
\newcommand{\pin}{\ensuremath{\text{in }}\xspace}
\newcommand{\splt}{\ensuremath{\textit{split}}\xspace}
\newcommand{\join}{\ensuremath{\textit{join}}\xspace}
\newcommand{\inl}{\ensuremath{\textit{inl}}\xspace}
\newcommand{\inr}{\ensuremath{\textit{inr}}\xspace}
\newcommand{\ourLang}{\ensuremath{\texttt{STCLang}}\xspace}
\begin{document}
\sloppy

\title
	[Category-Theoretic Foundations of STCLang]
	{Category-Theoretic Foundations of ``STCLang: State Thread Composition as a Foundation for Monadic Dataflow Parallelism''}

\author{Sebastian Ertel}
\authornote{Work done while at TU Dresden.}          
\affiliation{
  \department{Dresden Research Lab}              
  \institution{Huawei Technologies}            
  \city{Dresden}
  \country{Germany}
}
\email{sebastian.ertel@huawei.com}          

\author{Justus Adam}
\affiliation{
	\department{Chair for Compiler Construction}              
	\institution{Technische Universit\"at Dresden}            
	\city{Dresden}
	\country{Germany}
}
\email{justus.adam@tu-dresden.de}          

\author{Norman A.~Rink}
\affiliation{
	\department{Chair for Compiler Construction}              
	\institution{Technische Universit\"at Dresden}            
	\city{Dresden}
	\country{Germany}
}
\email{norman.rink@tu-dresden.de}          

\author{Andr\'es Goens}
\affiliation{
	\department{Chair for Compiler Construction}              
	\institution{Technische Universit\"at Dresden}            
	\city{Dresden}
	\country{Germany}
}
\email{andres.goens@tu-dresden.de}          

\author{Jeronimo Castrillon}
\affiliation{
	\department{Chair for Compiler Construction}              
	\institution{Technische Universit\"at Dresden}            
	\city{Dresden}
	\country{Germany}
}
\email{jeronimo.castrillon@tu-dresden.de}          

\begin{abstract}
This manuscript gives a category-theoretic foundation to the composition of State Threads as a Foundation for Monadic Dataflow Parallelism.
It serves as a supplementary formalization of the concepts introduced in the Article ``STCLang: State Thread Composition as a Foundation for Monadic Dataflow Parallelism", as published in the Proceedings of the 12th ACM SIGPLAN International Symposium on Haskell (Haskell'19)~\cite{ertel_haskell19}.
\end{abstract}

\renewcommand{\syntleft}{$}%
\renewcommand{\syntright}{$}%
\setlength{\grammarparsep}{2pt} %
\setlength{\grammarindent}{3em}%

\maketitle

\section{Category-Theoretical Foundation for State Threads}
\label{sec:cat}


This manuscripts develops in some detail a formalization of state threads in \ourLang~\cite{ertel_haskell19}.
Our development relies on the formalism of category theory.
This manuscript aims to serve as supplementary material for~\cite{ertel_haskell19}, and presumes familarity with concepts presented therein.

The two key ideas underlying \ourLang are that
(1) each state thread operates on its own private state, and %
(2) the composition of state threads retains enough information to extract parallelism from composed state threads.
Once these ideas have been made precise,\out{ in category-theoretic terms,} they naturally lead to the introduction of the \texttt{smap} functor, which generalizes \texttt{map} to situations where state must be kept track of.
The \texttt{smap} functor introduces enough structure into our state threads to let us extract (pipeline) parallelism.
We also identify other structures in state threads that are inherently parallel.
\nr{%
	Alternative version of the last sentence -- if we want to stress the different kinds of parallelism here already: %
	\emph{%
		We also identify other structures in state threads that naturally lead to data parallelism and task-level parallelism.
	}%
}
\subsection{Foundations}%
\label{sec:cat:foundations}%
\ourLang is a typed $\lambda$-calculus extended with state threads.
The details of the $\lambda$-calculus are not important, and almost any typed $\lambda$-calculus can be augmented with state threads to yield an implementation of \ourLang.
For our formal model of state threads presented in this section it is only relevant that the semantics of the $\lambda$-calculus can be interpreted in category-theoretic terms.

Let \catH be the category whose objects \obj{\catH} are the types in the $\lambda$-calculus and whose morphisms \morph{\catH} are the functions of the $\lambda$-calculus.
The category \catH is required to be \emph{cartesian closed}, which essentially means that for any types $a, b \in\obj{\catH}$, the product type $a\times b$ and the function type $a\rightarrow b$ exist, i.e.~$a\times b \in\obj{\catH}$ and $a\rightarrow b \in\obj{\catH}$.
Examples of cartesian closed categories are the categories of domains typically encountered in denotational semantics.

In more concrete terms, since most functional programming languages are fancy $\lambda$-calculi, \ourLang can be built on top of almost any functional language.
In the case of Haskell, for example, the category \catH is known as \emph{Hask}.%
\footnote{See \url{https://wiki.haskell.org/Hask}, although full \emph{Hask} is not cartesian closed, and may in fact not even be a category (cf.~\url{http://math.andrej.com/2016/08/06/hask-is-not-a-category/}).}
\subsection{State threads}%
\label{sec:cat:state:threads}%
In \ourLang, every state thread has its own private state that it operates on.
Hence, state threads and their respective states are both indexed by the same index set, henceforth denoted as $N$.
In practice, $N$ is typically finite, but it is generally sufficient to assume that $N$ is countable, i.e.~$N \cong \mathbb{N}$.

For the formal development of \ourLang in the present section, it is convenient to require not only that each state thread has its own state, but also that every state is of a distinct type.
Types are objects in the category \catH, leading to the following definition.
\begin{definition}[State objects, global state]
\label{thm:def:global:state}
Let $N$ be a countable index set.
For $n\in N$, let $s_n\in\obj{\catH}$ be pairwise distinct %
(i.e.~$s_n = s_m \Rightarrow n = m$).
\begin{enumerate}
	\item
	For $I \subseteq N$, define $s_I = \prod_{n\in I} s_n$.
	The $s_I$ are called state objects.
	\item
	The state objects $s_n$, for $n\in N$, are called fundamental.
	\item
	The state object $s_N = \prod_{n\in N} s_n$ is called the global state.
\end{enumerate}%
\end{definition}%
Note that $s_{\{n\}} = \prod_{m\in \{n\}} s_m = s_n$, $n\in N$, %
i.e.~the fundamental state objects are precisely the state objects $s_I$ for which $I\subseteq N$ has cardinality~$1$.
We also use the convention $s_{\emptyset} = ()$, i.e.~the unit type.

The requirement that the $s_n$ be pairwise distinct is not a restriction of \ourLang's programming model.
In Haskell, one can use the $\texttt{newtype}$ keyword to generate new and distinct types.
Typically, $\lambda$-calculi with less advanced type systems also offer ways of constructing new types in similar ways, e.g.~by suitably tagging types.

Having introduced state objects, we can now define \ourLang's state threads.
It is then readily seen that state threads form a subcategory of \catH, which we refer to as the \emph{category of sate threads}.
\begin{definition}[State thread]
\label{thm:def:state:thread}
Let $\{s_n\}_{n\in N}$ be the set of fundamental state objects.
A \emphdef{state thread} is a morphism $f \in \morph{\catH}$ such that
\begin{align}
	f : (a\times s_I) \rightarrow (b\times s_I) \,,
	\label{eq:def:state:thread}
\end{align}
where $I\subseteq N$. %
A \emphdef{fundamental} state thread is a state thread %
$f : (a\times s_n) \rightarrow (b\times s_n)$,
i.e.~a state thread for which $I= \{n\}$, $n\in N$, in Equation~\eqref{eq:def:state:thread}.
\end{definition}%
\begin{lemma}
\label{thm:lem:category:state:thread}
The following define the objects and morphisms of a subcategory \catS of \catH,
\begin{align}
	& \obj{\catS} = \left\{ a\times s_I \mid
		a\in\obj{\catH},
	 	I\subseteq N \right\} \,, \\
	& \morph{\catS} = \left\{ f : (a\times s_I) \rightarrow (b\times s_I) \mid
		f\in\morph{\catH},
		I\subseteq N \right\} \,.
\end{align}%
\begin{proof}
Clearly, $\id_{a\times s_I} \in \morph{\catS}$.
\catS inherits composition of morphisms from \catH.
Now, let $f,g \in \morph{\catS}$.
Whenever $g\circ f$ is defined in \catH, then $g\circ f \in\morph{\catS}$ follows directly by inspecting the signatures of $f$, $g$, and $g\circ f$.
\end{proof}%
\end{lemma}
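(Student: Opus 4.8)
The plan is to check directly the defining conditions for $\catS$ to be a subcategory of $\catH$: that $\obj{\catS}$ and $\morph{\catS}$ are subcollections of $\obj{\catH}$ and $\morph{\catH}$, that the source and target of every morphism in $\morph{\catS}$ belong to $\obj{\catS}$, that $\catS$ contains the identity on each of its objects, and that $\morph{\catS}$ is closed under the composition inherited from $\catH$. Once these hold, associativity and the unit laws are automatically inherited from $\catH$, so nothing further needs to be verified. The inclusions are immediate from the definitions, since each $a \times s_I$ exists as an object of $\catH$ (products exist because $\catH$ is cartesian closed) and each element of $\morph{\catS}$ is by definition a morphism of $\catH$ whose source $a \times s_I$ and target $b \times s_I$ again lie in $\obj{\catS}$.

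For identities, I would note that for any $a \times s_I \in \obj{\catS}$ the morphism $\id_{a \times s_I}$ fits the schema of Equation~\eqref{eq:def:state:thread} with $b = a$ and the same index set $I$, so $\id_{a \times s_I} \in \morph{\catS}$.

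The substantive step is closure under composition. Taking $f : (a \times s_I) \to (b \times s_I)$ and $g : (c \times s_J) \to (d \times s_J)$ in $\morph{\catS}$ with $g \circ f$ defined in $\catH$, composability forces $\tgt{f} = \src{g}$, that is $b \times s_I = c \times s_J$. I would then argue that this equality of product types yields $b = c$ and $s_I = s_J$, and invoke the pairwise distinctness of the fundamental state objects from Definition~\ref{thm:def:global:state} to conclude $I = J$. Hence $g \circ f : (a \times s_I) \to (d \times s_I)$ carries one common index set on its source and target, matches the state-thread schema, and therefore lies in $\morph{\catS}$.

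I expect the only delicate point to be the inference that $b \times s_I = c \times s_J$ forces $s_I = s_J$ and thus $I = J$, because in an arbitrary cartesian closed category a product object need not factor uniquely. It is legitimate here because the objects of $\catH$ are the syntactic types of the $\lambda$-calculus, on which the product constructor is injective, and because the distinctness hypothesis guarantees that $s_I$ determines $I$. I would make this dependence explicit rather than leave it entirely to \emph{inspecting the signatures}.
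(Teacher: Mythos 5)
Your proposal is correct and follows essentially the same route as the paper: identities and closure under composition are the only things to check, and the key point---that composability forces $b\times s_I = c\times s_J$ and hence $I = J$ by the pairwise distinctness of the fundamental state objects---is exactly what the paper relies on (it states this explicitly in the discussion immediately following the lemma). Your only addition is to flag that recovering $I$ from the product object requires the syntactic injectivity of the type constructors rather than a purely categorical argument, which is a worthwhile clarification but not a different proof.
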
%
\begin{definition}[Category of state threads]
\label{thm:def:category:state:thread}
The category \catS from Lemma~\ref{thm:lem:category:state:thread} is called the \emphdef{category of state threads}.
\end{definition}%

The intuition is that the function that corresponds to the state thread $f : (a\times s_I) \rightarrow (b\times s_I)$ in the underlying $\lambda$-calculus only manipulates the part $s_I$ of the global state $s_N$, $I \subseteq N$.
The proof of Lemma~\ref{thm:lem:category:state:thread} relies on the fact that state threads %
$f : (a\times s_I) \rightarrow (b\times s_I)$ and %
$g : (b\times s_J) \rightarrow (c\times s_J)$ %
can be composed (in \catH or \catS) if and only if $I = J$.
(This observation relies on the pairwise distinctness of the $\{s_n\}_{n\in N}$.)
In the intuition just given, this means that $f$ and $g$ operate on the exact same part of the global state.
Without additional information about the structure of $f$ and $g$, an implementation of \ourLang is then forced to evaluate the composition $g \circ f$ sequentially.
However, an implementation can potentially exploit parallelism if $I \cap J = \emptyset$, i.e.~when $f$ and $g$ operate on disjoint parts of the global state.
The next section explains how \ourLang facilitates the composition of state threads $f : (a\times s_I) \rightarrow (b\times s_I)$ and %
$g : (b\times s_J) \rightarrow (c\times s_J)$ %
with arbitrary $I,J\subseteq N$.
\out{%
	\nr{%
		Here we have effectively motivated our analysis of the composition of state threads by our desire to find implicit parallelism.
		Therefore, we should really stress parallelism at the beginning of this section.
		This is indeed done now, but I wonder whether is is stressed enough in the first paragarph of the present section.
	}%
	}
\subsection{Composition of state threads}
\label{sec:composition:state:threads}
At a high level, \ourLang programs are composed of state threads, and compositions can ultimately be broken down into fundamental state threads.
From now on, we assume that the fundamental state threads that occur in a given \ourLang program are in 1-1 correspondence with the index set $N$.
The following definition introduces the symbol \calM to refer to the set of fundamental state threads in a program, i.e.~the \emph{state threads of interest}.
\begin{definition}[Fundamental state threads of interest]
\label{thm:def:interest:state:thread}
Let $N$ be an index set and let $\{s_n\}_{n\in N}$ be the (pairwise distinct) fundamental state objects, as in the previous section.
Let $\calM \subseteq \morph{\catS}$, %
and assume there is a bijective map $\phi_{\calM} : N \rightarrow \calM$ (i.e.~a 1-1 correspondence) such that %
\begin{align}
	\phi_{\calM}(n) : (a_n\times s_n) \rightarrow (b_n\times s_n) \,,
\end{align}
where $a_n,b_n\in\obj{\catH}$.
Then the elements of \calM are the \emphdef{fundamental state threads of interest}.
\end{definition}%
\ourLang handles state implicitly.
This motivates the organization of the fundamental state threads in \calM into a graph that hides the state objects but makes the possibility of composition explicit.

\begin{definition}[Multi-graph of fundamental state threads]
\label{thm:def:multi-graph}
Let \calM and $\phi_{\calM}$ as in Definition~\ref{thm:def:interest:state:thread}.
The directed (multi-)graph $\Delta_{\calM}$ has the following vertices ($V$) and edges ($E$),
\begin{align}
	& V(\Delta_{\calM}) = \{ a_n, b_n \mid
							 \phi_{\calM}(n) : (a_n\times s_n) \rightarrow (b_n\times s_n), n\in N \} \,, \\
 	& E(\Delta_{\calM}) = N \,,
\end{align}%
and the maps $\Src,\Tgt : E(\Delta_{\calM}) \rightarrow V(\Delta_{\calM})$ are defined as follows,
\begin{align}
	\src{n} = a_n, \text{ if } \phi_{\calM}(n) : (a_n\times s_n) \rightarrow (b_n\times s_n) \,,
	\label{eq:multi-graph:src} \\
	\tgt{n} = b_n, \text{ if } \phi_{\calM}(n) : (a_n\times s_n) \rightarrow (b_n\times s_n) \,.
	\label{eq:multi-graph:tgt}
\end{align}%
\end{definition}%

Note that directed multi-graphs are also referred to as \emph{quivers} in the literature.
Also note that in the light of Equations~\eqref{eq:multi-graph:src} and~\eqref{eq:multi-graph:tgt}, the signature of $\phi_{\calM}(n)$ can be written without referring to the (arbitrary) objects $a_n$, $b_n$:
\begin{align}
	\phi_{\calM}(n) : (\src{n}\times s_n) \rightarrow (\tgt{n}\times s_n) \,.
\end{align}%

\begin{figure*}
\begin{minipage}[b]{0.4\linewidth}
\begin{center}
\begin{tikzpicture}[
	every node/.style = {align=center}]
	\node (a) {\tiny $a$};
	\node[right =12mm of a] (b) {\tiny $b$};
	\node[above right=10mm of b] (c) {\tiny $c$};
	\node[below right=10mm of b] (e) {\tiny $e$};
	\node[below right=10mm of c] (d) {\tiny $d$};

	\path[->, looseness=1.0] (a) edge[out=30,in=150] node[midway, above] {\tiny $1$} (b);
	\path[->, looseness=1.0] (a) edge[out=-30,in=-150] node[midway, below] {\tiny $2$} (b);

	\path[->] (c) edge node[midway, above] {\tiny $3$} (b);
	\path[->] (b) edge node[midway, below] {\tiny $4$} (e);
	\path[->] (c) edge node[midway, above] {\tiny $5$} (d);

	\path[->] (e) edge node[midway, above] {\tiny $6$} (d);
	\path[->, looseness=1.0] (d) edge[out=-70,in=-10] node[midway, below] {\tiny $7$} (e);
\end{tikzpicture}%
	\captionof{figure}{Example of a multi-graph $\Delta_{\calM}$ of fundamental state threads for $N=\{1,\dots,7\}$.}%
	\label{fig:example:multi-graph}%
\end{center}%
\end{minipage}%
\hspace{8mm}%
\begin{minipage}[b]{0.46\linewidth}
\begin{center}
\begin{tikzpicture}[
	every node/.style = {align=center}]
	\node (a) {\tiny $a\times s_N$};
	\node[right =12mm of a] (b) {\tiny $b\times s_N$};
	\node[above right=10mm of b] (c) {\tiny $c\times s_N$};
	\node[below right=10mm of b] (e) {\tiny $e\times s_N$};
	\node[below right=10mm of c] (d) {\tiny $d\times s_N$};

	\path[->, looseness=1.0] (a) edge[out=30,in=150] node[midway, above] {\tiny $\phi_{\calM}^{*}(1)$} (b);
	\path[->, looseness=1.0] (a) edge[out=-30,in=-150] node[midway, below] {\tiny $\phi_{\calM}^{*}(2)$} (b);

	\path[->] (c) edge node[midway, above] {\tiny $\phi_{\calM}^{*}(3)$} (b);
	\path[->] (b) edge node[midway, below] {\tiny $\phi_{\calM}^{*}(4)$} (e);
	\path[->] (c) edge node[midway, above] {\tiny $\phi_{\calM}^{*}(5)$} (d);

	\path[->] (e) edge node[midway, above] {\tiny $\phi_{\calM}^{*}(6)$} (d);
	\path[->, looseness=1.0] (d) edge[out=-70,in=-10] node[midway, below] {\tiny $\phi_{\calM}^{*}(7)$} (e);
\end{tikzpicture}%
	\captionof{figure}{The corresponding subcategory $\catC_{\calM}$ in \catS.}
	\label{fig:example:subcategory}%
\end{center}%
\end{minipage}%
\end{figure*}%
Figure~\ref{fig:example:multi-graph} gives an example of a multi-graph $\Delta_{\calM}$ for seven fundamental state threads.
Note how the state objects do not appear explicitly; they are, however, implicit in the naming of the edges.
By contrast, composition of the state threads $\phi_{\calM}(m)$ and $\phi_{\calM}(n)$ is explicitly suggested whenever either $\tgt{n} = \src{m}$ or $\tgt{m} = \src{n}$.

The composition of state threads is natural in \catS, and we would like to use this composition also for the state threads in \calM.
To facilitate this, we now construct a suitable embedding of the graph $\Delta_{\calM}$ into the category \catS.
Our language is deliberately imprecise here to convey the right intuition.
More correctly, we should speak of embedding $\Delta_{\calM}$ into the graph underlying $\catS$.
Said yet another way, we are looking for a way to identify the free category over $\Delta_{\calM}$ inside $\catS$;
and this is precisely what is achieved by the remaining definitions and lemma in the present section.
\begin{definition}[Free category over a graph]%
\label{thm:def:free:category}%
The \emphdef{free category} $\freeCat{\Delta_{\calM}}$ over $\Delta_{\calM}$ is the category whose objects are the vertices of $\Delta_{\calM}$ and whose morphisms are precisely the paths in $\Delta_{\calM}$, i.e.
\begin{align}
	& \obj{\freeCat{\Delta_{\calM}}} = V(\Delta_{\calM}), \\
	& \morph{\freeCat{\Delta_{\calM}}} = \left\{
		n_k n_{k-1} \dots n_2 n_1 \mid
		k \in \mathbb{N},
		n_i\in N,
		\tgt{n_i} = \src{n_{i+1}} \text{ for } 1\le i \le k-1 \right\} \nonumber \\
	& \phantom{\morph{\freeCat{\Delta_{\calM}}} = }\text{ }
	  \cup \left\{\epsilon_v \mid v\in V(\Delta_{\calM})\right\}.
\end{align}%
\end{definition}%
This definition of the free category over a graph is completely standard.
Note that we take a separate copy of the empty path $\epsilon$ for each vertex $v$ of $\Delta_{\calM}$.
In categorical terms, $\epsilon_v : v\rightarrow v$ is the identity morphism at the object $v$.
The morphisms $\morph{\freeCat{\Delta_{\calM}}}$ can be thought of as words over the alphabet $N$.
In the following, we adopt this point of view.
Note that there is then a separate copy of the empty word for each vertex $v$ of $\Delta_{\calM}$.

By the universal property of the product, any state thread $f : (a\times s_I) \rightarrow (b\times s_I)$, with $I\subseteq N$, has a natural (and unique) extension to a state thread that operates on the global state $s_N$.
\begin{definition}[Extension of state threads]
\label{thm:def:state:thread:extension}%
Let $f : (a\times s_I) \rightarrow (b\times s_I)$ be a state thread.
The state thread $f^{*} : (a\times s_N) \rightarrow (b\times s_N)$ is obtained from $f$ by extending $f$ with the identity on $s_{N\setminus I}$.
\end{definition}%
Using this extension of state threads to the global state $s_N$, we can finally define the functor that identifies the graph $\Delta_{\calM}$ inside the category of state threads.
\begin{definition}
\label{thm:def:Phi:functor}%
The functor $\Phi_{\calM} : \freeCat{\Delta_{\calM}} \rightarrow \catS$ is defined by %
$\Phi_{\calM}(v) = v \times s_N$ for objects $v\in\obj{\freeCat{\Delta_{\calM}}}$ and by
\begin{align}
	& \Phi_{\calM}(\epsilon_v) = \id_{v\times s_N}
		\label{eq:def:Phi:functor:1} \,, \\
	& \Phi_{\calM}(n_k n_{k-1} \dots n_2 n_1) =
		\phi_{\calM}^{*}(n_k) \circ \phi_{\calM}^{*}(n_{k-1}) \circ \dots
		\circ \phi_{\calM}^{*}(n_2) \circ \phi_{\calM}^{*}(n_1)
		\label{eq:def:Phi:functor:2}
\end{align}%
for morphisms in $\morph{\freeCat{\Delta_{\calM}}}$.
The composition on the right-hand side of Equation~\eqref{eq:def:Phi:functor:2} is the composition in \catS (which is the same as in \catH).
\end{definition}%

Based on Equations~\eqref{eq:def:Phi:functor:1} and~\eqref{eq:def:Phi:functor:2}, the functor properties are readily verified for $\Phi_{\calM}$.
More interestingly, $\Phi_{\calM}$ picks out a subcategory in \catS.
\begin{lemma}
\label{thm:lem:image:Phi}%
The image of $\,\Phi_{\calM}$ forms a subcategory of \catS.
\begin{proof}
Straightforward.
The only subtle aspect is that for two words $w_1,w_2\in\morph{\freeCat{\Delta_{\calM}}}$
such that the composition $\Phi_{\calM}(w_2) \circ \Phi_{\calM}(w_1)$ is in \catS, one must show that  $\Phi_{\calM}(w_2) \circ \Phi_{\calM}(w_1)$ is in the image of $\Phi_{\calM}$.
Now, if $\Phi_{\calM}(w_2)$ and $\Phi_{\calM}(w_1)$ can be composed in \catS, then $\tgt{w_1} = \src{w_2}$, %
with natural extensions of $\Src$, $\Tgt$ from letters in $N$ to words in $\morph{\freeCat{\Delta_{\calM}}}$.
But then, $w_2w_1 \in \morph{\freeCat{\Delta_{\calM}}}$, and hence $\Phi_{\calM}(w_2) \circ \Phi_{\calM}(w_1) = \Phi_{\calM}(w_2w_1)$ is in the image of $\,\Phi_{\calM}$.
\end{proof}
\end{lemma}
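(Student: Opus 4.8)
The plan is to verify directly that the image of $\Phi_{\calM}$ satisfies the two defining closure conditions of a subcategory of \catS: it must contain the identity morphism on each of its objects, and it must be closed under the composition inherited from \catS. The object part of the image is the set $\{v\times s_N \mid v\in V(\Delta_{\calM})\}$ and the morphism part is $\{\Phi_{\calM}(w) \mid w\in\morph{\freeCat{\Delta_{\calM}}}\}$. Since each $\Phi_{\calM}(w)$ has signature $(\src{w}\times s_N)\rightarrow(\tgt{w}\times s_N)$ with $\src{w},\tgt{w}\in V(\Delta_{\calM})$, the source and target of every image morphism already lie among the image objects, so nothing extra is needed on that front.

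First I would dispatch the identities. For an object $v\times s_N$ in the image we have $v\in V(\Delta_{\calM})=\obj{\freeCat{\Delta_{\calM}}}$, so $\epsilon_v\in\morph{\freeCat{\Delta_{\calM}}}$ and, by Equation~\eqref{eq:def:Phi:functor:1}, $\Phi_{\calM}(\epsilon_v)=\id_{v\times s_N}$ lies in the image. Thus every identity that \catS would demand of the subcategory is present.

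The substance of the argument is closure under composition, and the hard part is that composability in \catS is a statement about the underlying \catH-objects, whereas membership in the image is a statement about \emph{words}. Concretely, suppose $w_1,w_2\in\morph{\freeCat{\Delta_{\calM}}}$ and that $\Phi_{\calM}(w_2)\circ\Phi_{\calM}(w_1)$ is defined in \catS. Every image morphism operates on the full global state $s_N$, so the only constraint imposed by composability in \catS is that the data objects match, i.e.~$\tgt{w_1}=\src{w_2}$, where I extend $\Src$ and $\Tgt$ from letters to words by setting $\src{n_k\dots n_1}=\src{n_1}$, $\tgt{n_k\dots n_1}=\tgt{n_k}$, and $\src{\epsilon_v}=\tgt{\epsilon_v}=v$. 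This matching condition is exactly the gluing condition that makes the concatenation $w_2w_1$ a legal path in $\Delta_{\calM}$: the internal adjacency conditions hold within $w_1$ and within $w_2$ because each is already a morphism of $\freeCat{\Delta_{\calM}}$, and the single new junction is guaranteed by $\tgt{w_1}=\src{w_2}$. Hence $w_2w_1\in\morph{\freeCat{\Delta_{\calM}}}$.

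It then remains to invoke functoriality. Because $\Phi_{\calM}$ is a functor (Definition~\ref{thm:def:Phi:functor}), it sends concatenation of paths to composition, so $\Phi_{\calM}(w_2w_1)=\Phi_{\calM}(w_2)\circ\Phi_{\calM}(w_1)$; the composite therefore equals $\Phi_{\calM}$ of a word and lies in the image, as required. I expect the empty-word cases to be the only points needing care---one should check that the extension of $\Src,\Tgt$ behaves correctly when $w_1$ or $w_2$ is some $\epsilon_v$, and note that injectivity of $\Phi_{\calM}$ is irrelevant, since we need only closure of the image as a set of morphisms rather than a bijection with words. The pairwise distinctness of the $s_n$, which underlies the reduction of composability in \catS to matching the data objects over a common $s_N$, is the one background fact I would flag as doing real work.
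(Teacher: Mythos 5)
Your proposal is correct and follows essentially the same route as the paper's proof: the identity and signature checks are the ``straightforward'' part, and the key step is exactly the one the paper isolates, namely that composability in \catS forces $\tgt{w_1}=\src{w_2}$ (with \Src, \Tgt extended to words), so that $w_2w_1$ is a path and $\Phi_{\calM}(w_2)\circ\Phi_{\calM}(w_1)=\Phi_{\calM}(w_2w_1)$ lies in the image. The extra detail you supply (explicit definition of the extended \Src and \Tgt, the $\epsilon_v$ cases, and the remark that injectivity of $\Phi_{\calM}$ is not needed) is consistent with, and merely elaborates on, the paper's argument.
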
%
\begin{definition}[Image of $\,\Phi_{\calM}$]
\label{thm:def:image:Phi}%
The subcategory of \catS that is the image of $\Phi_{\calM}$ is denoted as $\catC_{\calM}$.
\end{definition}%

In summary, by extending the state threads of interest to operate on the global state $s_N$, it has become possible to compose state threads %
$f^{*} : (a\times s_N) \rightarrow (b\times s_N)$ and %
$g^{*} : (b\times s_N) \rightarrow (c\times s_N)$ %
even if the original state threads $f$, $g$ operate on disjoint parts $s_I$ and $s_J$ of the global state.
At the same time, the information that the extended state thread $f^{*}$ leaves the state $s_{N\setminus I}$ unchanged is retained by the fact $f^{*} = \Phi_{\calM}(w)$, for some $w\in\morph{\freeCat{\Delta_{\calM}}}$.
In fact, the letters from $N$ that occur in $w$ are precisely the elements of the subset $I\subseteq N$.
An analogous statement holds for $g^{*}$.

Moreover, we have identified the subcategory $\catC_{\calM}$ of \catS that is generated by the state threads of interest in \calM.
Figure~\ref{fig:example:subcategory} visualizes how $\catC_{\calM}$ is related to the multi-graph $\Delta_{\calM}$ from Figure~\ref{fig:example:multi-graph}.
\subsection{The \smap functor}%
\label{sec:smap}%
\nr{%
	It is not fair to speak of a ``functor for state threads'' since we do not define \smap on the category of state threads (i.e.~\catS), but only on a subcategory.
	On another note, I also would not start this section speaking of \smap as a ``higher-order'' function if we have just said in the section title that it is a functor.
	(I know that in the terminology of Haskell, both are not mutually exclusive, but I would try to reduce the potential for confusion here.)
}%

The functor $\Phi_{\calM}$ from Definition~\ref{thm:def:Phi:functor} is not the only way of identifying $\freeCat{\Delta_{\calM}}$ as a subcategory in $\catS$.
Recall that the objects of $\freeCat{\Delta_{\calM}}$ are the vertices of the multi-graph $\Delta_{\calM}$, which in turn are objects of \catH, i.e.~types in the $\lambda$-calculus on that \ourLang is based.
An alternative way of identifying $\freeCat{\Delta_{\calM}}$ in $\catS$ is obtained by mapping the objects of $\freeCat{\Delta_{\calM}}$ to list types.
By making this precise we will naturally be led to the \smap functor, i.e.~the functor that generalizes \textit{map} to state threads.
\begin{definition}%
\label{thm:def:psi:map}%
Let \calM be the set of state threads of interest, %
and let $\phi_{\calM} : N \rightarrow \calM$ be the corresponding bijective map.
For each $n\in N$, recursively define a state thread $\psi_{\calM}(n)$ as follows,
\begin{align}
	& \psi_{\calM}(n) : ([\src{n}] \times s_n) \rightarrow ([\tgt{n}] \times s_n) \\
	& \psi_{\calM}(n) \: ([], \sigma) = ([], \sigma)
		\label{eq:def:psi:map:1} \\
	& \psi_{\calM}(n) \: (x : xs, \sigma) = \texttt{let }
		(y, \sigma') = \phi_{\calM}(n) (x, \sigma) \nonumber \\
	& \phantom{\psi_{\calM}(n) \: (x:xs, \sigma) = \texttt{let }}
		(ys, \sigma'') = \psi_{\calM}(n) \: (xs, \sigma') \nonumber \\
	& \phantom{\psi_{\calM}(n) \: (x:xs, \sigma) = \texttt{}} \texttt{in }
		(y:ys, \sigma'') \,,
		\label{eq:def:psi:map:2}
\end{align}
where $\Src$ and $\Tgt$ are the maps defining the multi-graph $\Delta_{\calM}$ from Definition~\ref{thm:def:multi-graph}.
\end{definition}%
\begin{definition}
\label{thm:def:Psi:functor}%
The functor $\Psi_{\calM} : \freeCat{\Delta_{\calM}} \rightarrow \catS$ is defined by %
$\Psi_{\calM}(v) = [v] \times s_N$ for objects $v\in\obj{\freeCat{\Delta_{\calM}}}$ and by
\begin{align}
	& \Psi_{\calM}(\epsilon_v) = \id_{[v]\times s_N}
		\label{eq:def:Psi:functor:1} \,, \\
	& \Psi_{\calM}(n_k n_{k-1} \dots n_2 n_1) =
		\psi_{\calM}^{*}(n_k) \circ \psi_{\calM}^{*}(n_{k-1}) \circ \dots
		\circ \psi_{\calM}^{*}(n_2) \circ \psi_{\calM}^{*}(n_1)
		\label{eq:def:Psi:functor:2}
\end{align}%
for morphisms in $\morph{\freeCat{\Delta_{\calM}}}$.
\end{definition}%

Exactly as in Lemma~\ref{thm:lem:image:Phi} one verifies that the image of $\Psi_{\calM}$ is a subcategory of \catS.
\begin{definition}[Image of $\,\Psi_{\calM}$]
\label{thm:def:image:Psi}%
The subcategory of \catS that is the image of $\Psi_{\calM}$ is denoted as $\catC_{\calM}^{[]}$.
\end{definition}%

The \smap functor will be defined to mediate between the categories $\catC_{\calM}$ and $\catC_{\calM}^{[]}$.
This means that, analogously to the \textit{map} functor, \smap takes a state thread with signature $(a\times s_N) \rightarrow (b\times s_N)$ and returns a state thread with signature $([a]\times s_N) \rightarrow ([b]\times s_N)$.
Additionally, if the argument of \smap is composed of multiple fundamental state threads, \smap implements the appropriate plumbing of state in the resulting state thread $([a]\times s_N) \rightarrow ([b]\times s_N)$.

Before we can define \smap, we need a lemma that states that, under certain conditions, the value of the functor $\Phi_{\calM}$ fully determines $\Psi_{\calM}$.
\begin{lemma}%
\label{thm:lem:Phi:determines:Psi}%
Let $w_1,w_2\in\morph{\freeCat{\Delta_{\calM}}}$ be such that no letter of $N$ occurs more than once in either $w_1$ or $w_2$.
Then,
\begin{align}
	\Phi_{\calM}(w_1) = \Phi_{\calM}(w_2) \:\Rightarrow\: \Psi_{\calM}(w_1) = \Psi_{\calM}(w_2) \,.
\end{align}%
\begin{proof}
The proof appears in Section~\ref{sec:smap:well:defined}.
It relies on an algebraic manipulation that is known as \emph{let floating} in the context of functional language compilers~\cite{PeytonJones:1996:LMB:232627.232630}.
\nr{%
	I am leaving the remark about \emph{let floating} here for readers who do not read the detailed proof in Section~\ref{sec:smap:well:defined}.
}%
\out{%
	\nr{%
		In case we do in the end decide to omit the section with the detailed proof,
		the following text can be used here instead (as a short proof):
	}%
	One shows by induction on the length of, say, $w_1$ that $\Phi_{\calM}(w_1)$ fully determines $\Psi_{\calM}(w_1)$.
	The inductive step requires a manipulation that is known as \emph{let floating} in the context of functional language compilers~\cite{PeytonJones:1996:LMB:232627.232630}.
}
\end{proof}%
\end{lemma}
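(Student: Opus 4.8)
The plan is to factor the implication through an auxiliary \emph{list-lifting} operation on morphisms. For any state thread $F : (a\times s_N)\rightarrow(b\times s_N)$, let $\widehat{F} : ([a]\times s_N)\rightarrow([b]\times s_N)$ denote the state thread obtained by the very recursion of Definition~\ref{thm:def:psi:map}, but with the fundamental state thread $\phi_{\calM}(n)$ replaced by the black box $F$ (and $s_n$ by $s_N$); this list-lifting is precisely the construction underlying the \smap functor defined below. The crucial observation is that $\widehat{F}$ depends only on $F$ as a morphism of \catS, since the recursion invokes $F$ purely as a function, so $F\mapsto\widehat{F}$ is a well-defined map on the relevant hom-sets. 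Granting this, the lemma reduces to the single claim that, for every word $w\in\morph{\freeCat{\Delta_{\calM}}}$ in which no letter of $N$ repeats,
\[
	\Psi_{\calM}(w) = \widehat{\Phi_{\calM}(w)} \,.
\]
Indeed, once this is established, $\Phi_{\calM}(w_1)=\Phi_{\calM}(w_2)$ yields $\widehat{\Phi_{\calM}(w_1)}=\widehat{\Phi_{\calM}(w_2)}$ because $\widehat{\cdot}$ is a function of its argument, and therefore $\Psi_{\calM}(w_1)=\Psi_{\calM}(w_2)$.

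I would prove the claim by induction on the length $k$ of $w=n_k\dots n_1$. For $k=0$ we have $\Phi_{\calM}(\epsilon_v)=\id_{v\times s_N}$ and $\Psi_{\calM}(\epsilon_v)=\id_{[v]\times s_N}$, and a direct unfolding of the recursion shows $\widehat{\id_{v\times s_N}}=\id_{[v]\times s_N}$. For $k=1$ one checks the routine identity $\psi_{\calM}^{*}(n)=\widehat{\phi_{\calM}^{*}(n)}$: both list-lift $\phi_{\calM}(n)$ along the list while leaving $s_{N\setminus\{n\}}$ untouched, so the only content is that passing to the global-state extension commutes with list-lifting. For the inductive step, write $w=n_k\,w'$ with $w'=n_{k-1}\dots n_1$ and combine $\Psi_{\calM}(w)=\psi_{\calM}^{*}(n_k)\circ\Psi_{\calM}(w')$ with the induction hypothesis $\Psi_{\calM}(w')=\widehat{\Phi_{\calM}(w')}$ and the $k=1$ identity to obtain $\Psi_{\calM}(w)=\widehat{\phi_{\calM}^{*}(n_k)}\circ\widehat{\Phi_{\calM}(w')}$.

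Everything now rests on a \emph{fusion} property of list-lifting: whenever two state threads $F,G$ operate on disjoint parts of the global state (that is, $F$ is the identity on the state touched by $G$ and vice versa), then $\widehat{G}\circ\widehat{F}=\widehat{G\circ F}$. This is exactly where the hypothesis that no letter repeats in $w$ is used: since $n_k\notin\{n_1,\dots,n_{k-1}\}$, the pairwise distinctness of the $s_n$ guarantees that $G=\phi_{\calM}^{*}(n_k)$ touches only $s_{n_k}$ while $F=\Phi_{\calM}(w')$ touches only $s_{\{n_1,\dots,n_{k-1}\}}$, and these are disjoint. Applying fusion gives $\widehat{\phi_{\calM}^{*}(n_k)}\circ\widehat{\Phi_{\calM}(w')}=\widehat{\phi_{\calM}^{*}(n_k)\circ\Phi_{\calM}(w')}=\widehat{\Phi_{\calM}(w)}$, which closes the induction.

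The fusion property is the main obstacle, and it is precisely where \emph{let floating} enters. Unfolding both sides on an input list $(x_1\!:\!\dots\!:\!x_\ell,\sigma)$, the left-hand side $\widehat{G}\circ\widehat{F}$ first threads the $F$-component of the state through the entire list and only afterwards threads the $G$-component, whereas the right-hand side $\widehat{G\circ F}$ interleaves the two per element. The argument consists in floating the \texttt{let}-bindings that implement $G$ out of (or into) the bindings that implement $F$; this rearrangement is sound exactly because $F$ and $G$ read and write disjoint state components, so the reordered bindings carry no data dependencies and the two state sub-threads evolve independently. For general, non-disjoint $F,G$ the reordering would be invalid and fusion would fail, which is why the no-repetition hypothesis cannot be dropped.
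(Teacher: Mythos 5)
Your proposal is correct and is essentially the paper's own argument in different packaging: your identity $\Psi_{\calM}(w)=\widehat{\Phi_{\calM}(w)}$ is precisely the content of the paper's auxiliary Lemma~\ref{thm:lem:Psi:formula} (the recursive formula characterizing $\Psi_{\calM}(w)$ as the list-lifting of $\Phi_{\calM}(w)$), and both proofs establish it by induction on the length of $w$ with a let-floating step justified by the disjointness of the state components touched by $n_k$ and by $w'$. The only cosmetic differences are that you isolate the let-floating into a named fusion property $\widehat{G}\circ\widehat{F}=\widehat{G\circ F}$ for state-disjoint $F,G$ where the paper performs the same manipulation inline, and you absorb the paper's final induction on the input list into the observation that $\widehat{\cdot}$ depends only on its argument extensionally.
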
%
\begin{theorem}[and definition of \smap]
\label{thm:thm:smap}%
If the multi-graph $\Delta_{\calM}$ has no cycles, %
then the following define a functor $\smap : \catC_{\calM} \rightarrow \catC_{\calM}^{[]}$,
\begin{align}
	& \smap\left(v\times s_N\right) = [v]\times s_N \,,
		\text{ for } v\in\obj{\freeCat{\Delta_{\calM}}}   \label{eq:smap:definition:objects} \\
	& \smap\left(\Phi_{\calM}(w)\right) = \Psi_{\calM}(w) \,,
		\text{ for } w\in\morph{\freeCat{\Delta_{\calM}}} \label{eq:smap:definition:morphisms}
	\,.
\end{align}%
\begin{proof}
Since $\Delta_{\calM}$ has no cycles, no letter from $N$ can occur more than once in any $w\in\morph{\freeCat{\Delta_{\calM}}}$.
Hence, Lemma~\ref{thm:lem:Phi:determines:Psi} guarantees that \smap is well-defined.

Verifying the functor properties is mechanical.
Let $w_1, w_2 \in\morph{\freeCat{\Delta_{\calM}}}$, and assume %
$w_1 = n_k\dots n_1$, %
$w_2 = m_l\dots m_1$, %
with $m_l\dots m_1, n_k\dots n_1 \in N$.
Then,
\begin{align}
	\smap\left(\Phi_{\calM}(w_2) \circ \Phi_{\calM}(w_1)\right)
		& = \smap\left(
				\phi_{\calM}^{*}(m_l) \circ \dots
				\circ \phi_{\calM}^{*}(m_1)
				\circ \phi_{\calM}^{*}(n_k) \circ \dots
				\circ \phi_{\calM}^{*}(n_1)
				\right) \\
		& = \smap\left(\Phi_{\calM}(w_2 w_1)\right) \\
		& = \Psi_{\calM}(w_2 w_1) \\
		& = \Psi_{\calM}(m_l \dots m_1 n_k \dots n_1) \\
		& = \psi_{\calM}^{*}(m_l) \circ \dots \circ \psi_{\calM}^{*}(m_1)
				\circ \psi_{\calM}^{*}(n_k) \circ \dots
				\circ \psi_{\calM}^{*}(n_1) \\
		& = \Psi_{\calM}(w_2) \circ \Psi_{\calM}(w_1) \\
		& = \smap\left(\Phi_{\calM}(w_2)\right) \circ
				\smap\left(\Phi_{\calM}(w_1)\right)
	\,.
\end{align}
\end{proof}%
\end{theorem}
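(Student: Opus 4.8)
The plan is to verify two things: that Equations~\eqref{eq:smap:definition:objects} and~\eqref{eq:smap:definition:morphisms} actually \emph{define} a map $\smap$ on objects and morphisms of $\catC_{\calM}$, and then that this map respects identities and composition. Well-definedness on objects is immediate, since the objects of $\catC_{\calM}$ are exactly the $v\times s_N$ with $v\in\obj{\freeCat{\Delta_{\calM}}}$ and the assignment is induced by the map $v\mapsto [v]$ on vertices. The genuinely delicate point is well-definedness on morphisms: every morphism of $\catC_{\calM}$ has the form $\Phi_{\calM}(w)$ for some word $w$, but $\Phi_{\calM}$ need not be injective, so I must rule out the possibility that two words $w_1\neq w_2$ with $\Phi_{\calM}(w_1)=\Phi_{\calM}(w_2)$ get sent to different values $\Psi_{\calM}(w_1)\neq\Psi_{\calM}(w_2)$.

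First I would reduce this well-definedness to Lemma~\ref{thm:lem:Phi:determines:Psi}. That lemma supplies exactly the implication $\Phi_{\calM}(w_1)=\Phi_{\calM}(w_2)\Rightarrow\Psi_{\calM}(w_1)=\Psi_{\calM}(w_2)$, but only under the hypothesis that no letter of $N$ occurs more than once in $w_1$ or $w_2$. So the key step is to deduce this no-repetition property from the acyclicity of $\Delta_{\calM}$. I would argue by contradiction: if an edge $n$ occurred at two distinct positions $i<j$ of a word $w=n_k\dots n_1$, then the two occurrences share source and target, and the letters $n_i,\dots,n_{j-1}$ would trace a closed directed walk based at $\src{n}$ that includes the edge $n$ — hence a cycle in $\Delta_{\calM}$, contradicting the hypothesis. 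Consequently every word in $\morph{\freeCat{\Delta_{\calM}}}$ is repetition-free, Lemma~\ref{thm:lem:Phi:determines:Psi} applies to every pair of words, and $\smap$ is well-defined on morphisms.

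With well-definedness in hand, the functor axioms are mechanical. For identities, I would use that $\id_{v\times s_N}=\Phi_{\calM}(\epsilon_v)$, so that $\smap(\id_{v\times s_N})=\Psi_{\calM}(\epsilon_v)=\id_{[v]\times s_N}$ by Equation~\eqref{eq:def:Psi:functor:1}, which is indeed the identity on $\smap(v\times s_N)=[v]\times s_N$. For composition, given composable morphisms $\Phi_{\calM}(w_1),\Phi_{\calM}(w_2)$ I would invoke the concatenation identity $\Phi_{\calM}(w_2)\circ\Phi_{\calM}(w_1)=\Phi_{\calM}(w_2 w_1)$ established in the proof of Lemma~\ref{thm:lem:image:Phi}, apply the definition of $\smap$ to get $\Psi_{\calM}(w_2 w_1)$, and then use that $\Psi_{\calM}$ is itself a functor on $\freeCat{\Delta_{\calM}}$ to split this as $\Psi_{\calM}(w_2)\circ\Psi_{\calM}(w_1)$. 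Rewriting both factors via the definition of $\smap$ yields $\smap(\Phi_{\calM}(w_2)\circ\Phi_{\calM}(w_1))=\smap(\Phi_{\calM}(w_2))\circ\smap(\Phi_{\calM}(w_1))$, as required.

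I expect the main obstacle to be the reduction in the second paragraph: concretely, recognizing that the acyclicity of $\Delta_{\calM}$ is precisely the hypothesis needed to feed Lemma~\ref{thm:lem:Phi:determines:Psi}, and being careful that it is \emph{edges} (letters of $N$), not vertices, whose non-repetition is forced by acyclicity. Everything downstream of that observation is routine bookkeeping with the defining equations of $\Phi_{\calM}$ and $\Psi_{\calM}$.
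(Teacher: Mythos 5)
Your proposal is correct and follows essentially the same route as the paper's proof: acyclicity forces every word in $\morph{\freeCat{\Delta_{\calM}}}$ to be repetition-free, Lemma~\ref{thm:lem:Phi:determines:Psi} then gives well-definedness on morphisms, and the functor axioms follow mechanically from $\Phi_{\calM}(w_2)\circ\Phi_{\calM}(w_1)=\Phi_{\calM}(w_2w_1)$ together with the defining equations of $\Psi_{\calM}$. You in fact supply slightly more detail than the paper, which merely asserts the no-repetition consequence of acyclicity and omits the identity check; your explicit cycle-from-repeated-edge argument and the verification via $\Phi_{\calM}(\epsilon_v)$ are both sound.
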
%
\subsection{Extracting parallelism from the structure of state threads}%
\label{sec:parallel}%
Having defined state threads in \ourLang and the \smap functor, we now investigate opportunities for extracting parallelism based on the structure of state threads.
We show that pipeline parallelism arises naturally from \smap, and we identify structures that exhibit data and task-level parallelism.
%
%
\subsubsection{Pipeline parallelism}%
\label{sec:pipeline:parallelism}%
The \smap functor is defined in terms of $\Psi_{\calM}$, for which Equation~\eqref{eq:def:Psi:functor:2} suggests a very sequential implementation:
to evaluate $\Psi_{\calM}(n_k\dots n_1)$ on an input $(\xs,\sigma)\in[a]\times s_N$, one should first apply $\psi_{\calM}^{*}(n_1)$, then $\psi_{\calM}^{*}(n_2)$, and so on.
By Definition~\ref{thm:def:psi:map}, this means that $\phi_{\calM}(n_1)$ is first applied to every element of the list \xs before $\phi_{\calM}(n_2)$ is applied etc.
To obtain pipeline parallelism, this order must be relaxed.

How this can be done is illustrated in Figure~\ref{fig:smap-pipe} for $k=2$.
The top diagram in Figure~\ref{fig:smap-pipe} is a graphical representation of Equation~\eqref{eq:def:Psi:functor:2} applied to the argument %
$([x_1,\dots,x_l], (\sigma_{n_1},\sigma_{n_2},\widetilde{\sigma}))\in[a]\times s_N$.
Red and blue arrows indicate which components of this argument are modified by applications of $\phi_{\calM}(n_1)$ and $\phi_{\calM}(n_2)$ respectively.
Note that each application of $\phi_{\calM}(n_1)$ and $\phi_{\calM}(n_2)$ modifies two components, and hence there are two arrows in every column of the top diagram.
The bottom diagram in Figure~\ref{fig:smap-pipe} can be thought of as a squeezed version of the top diagram.
In all but the first and the last column there are now four arrows: %
one pair of red arrows and one pair of blue arrows.
This indicates that $\phi_{\calM}(n_1)$ and $\phi_{\calM}(n_2)$ can be evaluated in parallel, yielding pipeline parallelism.
Note that while the top diagram has $2l$ columns, the bottom one only has $l\!+\!1$.
The data flowing through the pipeline are the elements of the lists %
$[x_1,\dots,x_l]$, $[y_1,\dots,y_l]$, and $[z_1,\dots,z_l]$.

Squeezing the top diagram of Figure~\ref{fig:smap-pipe} into the bottom diagram is possible since $\phi_{\calM}(n_1)$ and $\phi_{\calM}(n_2)$ operate on different fundamental state objects, i.e.~$n_1\ne n_2$.
That $n_1\ne n_2$ follows from the fact that the multi-graph $\Delta_{\calM}$ is acyclic, which was required to ensure that \smap is well-defined by Equation~\eqref{eq:smap:definition:morphisms}.
When $\Delta_{\calM}$ has cycles, pipeline parallelism can still be exploited in evaluating $\Psi_{\calM}(n_k\dots n_1)$ provided the $n_1,\dots,n_k\in N$ are pairwise distinct.
\out{%
	\nr{%
		Talking about $\Psi_{\calM}$ (and not \smap) in the previous sentence is a neat way of explaining the pipeline parallelism that is present for cyclic $\Delta_{\calM}$ without getting into discussions of what it means that \smap is not well-defined.
	}%
}%
More generally, for $w_1,w_2,w_2\in\morph{\freeCat{\Delta_{\calM}}}$ such that only $w_2$ contains multiple occurrences of the same letter in $N$, the functor property, i.e.~%
$\Psi_{\calM}(w_3 w_2 w_1) = \Psi_{\calM}(w_3) \circ \Psi_{\calM}(w_2) \circ \Psi_{\calM}(w_1)$, %
can be used to still exploit the parallelism in $\Psi_{\calM}(w_1)$ and $\Psi_{\calM}(w_3)$.
\newcommand{\redarrow}{\ensuremath{\textcolor{red}{\xrightarrow{\mathmakebox[4mm]{}}}}\xspace}
\newcommand{\bluearrow}{\ensuremath{\textcolor{blue}{\xrightarrow{\mathmakebox[4mm]{}}}}\xspace}
\begin{figure*}
{\footnotesize
\begin{minipage}[c]{\textwidth}
\begin{alignat*}{18}
\multicolumn{18}{l}{
	$\smap\left(\Phi_{\calM}(n_2 n_1)\right) \: ([x_1,\dots,x_l], (\sigma_{n_1},\sigma_{n_2},\widetilde{\sigma})) =
	 \Psi_{\calM}(n_2 n_1) \: ([x_1,\dots,x_l], (\sigma_{n_1},\sigma_{n_2},\widetilde{\sigma})) = $
}\\[1mm]
& ([x_1,
& \redarrow
& ([y_1,
&
& ([y_1,
&
& ([y_1,
&
& ([y_1,
& \bluearrow
& ([z_1,
&
& ([z_1,
&
& ([z_1,
&
& ([z_1,
\\
& \phantom{([} x_2,
&
& \phantom{([} x_2,
& \redarrow
& \phantom{([} y_2,
& \phantom{\redarrow}
& \phantom{([} y_2,
&
& \phantom{([} y_2,
&
& \phantom{([} y_2,
& \multicolumn{1}{c}{$\cdots$}
& \phantom{([} z_2,
&
& \phantom{([} z_2,
&
& \phantom{([} z_2,
\\
& \phantom{([} \vdots
&
& \phantom{([} \vdots
&
& \phantom{([} \vdots
& \multicolumn{1}{c}{$\ddots$}
& \phantom{([} \vdots
&
& \phantom{([} \vdots
&
& \phantom{([} \vdots
& \multicolumn{1}{c}{$\ddots$}
& \phantom{([} \vdots
&
& \phantom{([} \vdots
& \out{arrow for yl}
& \phantom{([}\vdots
\\
& \phantom{([} x_{l-1},
&
& \phantom{([} x_{l-1},
&
& \phantom{([} x_{l-1},
& \multicolumn{1}{c}{$\cdots$}
& \phantom{([} y_{l-1},
& \phantom{\redarrow}
& \phantom{([} y_{l-1},
&
& \phantom{([} y_{l-1},
&
& \phantom{([} y_{l-1},
& \bluearrow
& \phantom{([} z_{l-1},
&
& \phantom{([} z_{l-1},
\\
& \phantom{([} x_l],
&
& \phantom{([} x_l],
&
& \phantom{([} x_l],
&
& \phantom{([} x_l],
& \redarrow
& \phantom{([} y_l],
&
& \phantom{([} y_l],
& \phantom{\bluearrow}
& \phantom{([} y_l],
&
& \phantom{([} y_l],
& \bluearrow
& \phantom{([} z_l],
\\
& \phantom{(} (\sigma_{n_1},
& \redarrow
& \phantom{(} (\sigma_{n_1}^{(1)}
& \redarrow
& \phantom{(} (\sigma_{n_1}^{(2)}
& \multicolumn{1}{c}{$\cdots$}
& \phantom{(} (\sigma_{n_1}^{(l-1)}
& \redarrow
& \phantom{(} (\sigma_{n_1}^{(l)},
&
& \phantom{(} (\sigma_{n_1}^{(l)},
&
& \phantom{(} (\sigma_{n_1}^{(l)},
&
& \phantom{(} (\sigma_{n_1}^{(l)},
&
& \phantom{(} (\sigma_{n_1}^{(l)},
\\
& \phantom{((}\sigma_{n_2},
&
& \phantom{((}\sigma_{n_2},
&
& \phantom{((}\sigma_{n_2},
&
& \phantom{((}\sigma_{n_2},
&
& \phantom{((}\sigma_{n_2},
& \bluearrow
& \phantom{((} \sigma_{n_2}^{(1)},
& \multicolumn{1}{c}{$\cdots$}
& \phantom{((} \sigma_{n_2}^{(l-2)},
& \bluearrow
& \phantom{((} \sigma_{n_2}^{(l-1)},
& \bluearrow
& \phantom{((} \sigma_{n_2}^{(l)},
\\
& \phantom{((}\widetilde{\sigma}))
&
& \phantom{((}\widetilde{\sigma}))
&
& \phantom{((}\widetilde{\sigma}))
&
& \phantom{((}\widetilde{\sigma}))
&
& \phantom{((}\widetilde{\sigma}))
&
& \phantom{((} \widetilde{\sigma}))
&
& \phantom{((}\widetilde{\sigma}))
&
& \phantom{((}\widetilde{\sigma}))
&
& \phantom{((}\widetilde{\sigma}))
\end{alignat*}%
\end{minipage}
\begin{center}
$=$\\
\end{center}%
\begin{minipage}[c]{\textwidth}
\begin{alignat*}{12}
& ([x_1,
& \redarrow
& ([y_1,
& \bluearrow
& ([z_1,
&
& ([z_1,
&
& ([z_1,
&
& ([z_1,
\\
& \phantom{([} x_2,
&
& \phantom{([} x_2,
& \redarrow
& \phantom{([} y_2,
& \multicolumn{1}{c}{$\cdots$}
& \phantom{([} z_2,
&
& \phantom{([} z_2,
&
& \phantom{([} z_2,
\\
& \phantom{([} \vdots
&
& \phantom{([} \vdots
&
& \phantom{([} \vdots
& \multicolumn{1}{c}{$\ddots$}
& \phantom{([} \vdots
&
& \phantom{([} \vdots
&
& \phantom{([} \vdots
\\
& \phantom{([} x_{l-1},
&
& \phantom{([} x_{l-1},
&
& \phantom{([} x_{l-1},
& \multicolumn{1}{c}{$\cdots$}
& \phantom{([} y_{l-1},
& \bluearrow
& \phantom{([} z_{l-1},
&
& \phantom{([} z_{l-1},
\\
& \phantom{([} x_l],
&
& \phantom{([} x_l],
&
& \phantom{([} x_l],
& \phantom{\redarrow}
& \phantom{([} x_l],
& \redarrow
& \phantom{([} y_l],
& \bluearrow
& \phantom{([} z_l],
\\
& \phantom{(} (\sigma_{n_1},
& \redarrow
& \phantom{(} (\sigma_{n_1}^{(1)},
& \redarrow
& \phantom{(} (\sigma_{n_1}^{(2)},
& \multicolumn{1}{c}{$\cdots$}
& \phantom{(} (\sigma_{n_1}^{(l-1)},
& \redarrow
& \phantom{(} (\sigma_{n_1}^{(l)},
&
& \phantom{(} (\sigma_{n_1}^{(l)},
\\
& \phantom{([} \sigma_{n_2},
&
& \phantom{([} \sigma_{n_2},
& \bluearrow
& \phantom{([} \sigma_{n_2}^{(1)},
& \multicolumn{1}{c}{$\cdots$}
& \phantom{([} \sigma_{n_2}^{(l-2)},
& \bluearrow
& \phantom{([} \sigma_{n_2}^{(l-1)},
& \bluearrow
& \phantom{([} \sigma_{n_2}^{(l)},
\\
& \phantom{([} \widetilde{\sigma}))
&
& \phantom{([} \widetilde{\sigma}))
&
& \phantom{([} \widetilde{\sigma}))
&
& \phantom{([} \widetilde{\sigma}))
&
& \phantom{([} \widetilde{\sigma}))
&
& \phantom{([} \widetilde{\sigma}))
\end{alignat*}
\end{minipage}
}
\caption{%
	Graphical representation of the \smap functor.
	Red arrows indicate applications of $\phi_{\calM}(n_1)$, %
	blue arrows indicate applications of $\phi_{\calM}(n_2)$.
	The top diagram is a direct representation based on the definition of $\Psi_{\calM}$ in Equation~\eqref{eq:def:Psi:functor:2}.
	The equivalent diagram on the bottom exhibits the inherent pipeline parallelism of \smap.
}%
\label{fig:smap-pipe}%
\end{figure*}%
\subsubsection{Data parallelism}%
\label{sec:data:parallelism}%
When fundamental state threads have certain additional structure, \smap reduces to \map, enabling the exploitation of data parallelism.
In the following, two structures for which this is possible are presented.

First, consider a morphism in \catH of the form %
$f : a \times s_n \rightarrow b$, %
which uses the state object $s_n$ in a read-only fashion (similar to Haskell's \texttt{Reader} type).
By the universal property of the product, we can extend $f$ to a state thread, i.e.~to a morphism $\tilde{f}$ in \catS by setting $\tilde{f}(x,\sigma) = (f(x,\sigma), \sigma)$ for $x\in a$ and $\sigma\in s_{n}$.
The left pane of Figure~\ref{fig:universal:product} gives the corresponding universal diagram.
\begin{figure}
\begin{minipage}{0.5\linewidth}
\begin{center}
	{\footnotesize
	\begin{tikzpicture}
		\matrix (m) [matrix of math nodes,row sep=3em,column sep=4em,minimum width=2em]
		{
			  & a\times s_n &     \\
			b & b\times s_n & s_n \\};
		\path[-stealth]
		(m-1-2) edge node [left] {$f$} (m-2-1)
				edge node [right] {$\pi_2$} (m-2-3)
		(m-2-2) edge node [above] {$\pi_1$} (m-2-1)
				edge node [above] {$\pi_2$} (m-2-3)
		(m-1-2) edge [dashed] node [right] {$\tilde{f}$} (m-2-2) ;
	\end{tikzpicture}%
	}
\end{center}%
\end{minipage}%
\begin{minipage}{0.5\linewidth}
\begin{center}
	{\footnotesize
	\begin{tikzpicture}
		\matrix (m) [matrix of math nodes,row sep=3em,column sep=4em,minimum width=2em]
		{
			  & a\times s_n &     \\
			b & b\times s_n & s_n \\};
		\path[-stealth]
		(m-1-2) edge node [left] {$g\circ\pi_1$} (m-2-1)
				edge node [right] {$h\circ\pi_2$} (m-2-3)
		(m-2-2) edge node [above] {$\pi_1$} (m-2-1)
				edge node [above] {$\pi_2$} (m-2-3)
		(m-1-2) edge [dashed] node [right] {$g\!\times\!h$} (m-2-2) ;
	\end{tikzpicture}%
	}
\end{center}%
\end{minipage}%
	\caption{%
		Universal diagrams for the product $b\times s_n$ with the natural projections $\pi_1$ and $\pi_2$.
	}%
	\label{fig:universal:product}%
\end{figure}%
If, in the notation introduced in Section~\ref{sec:cat:state:threads}, $\tilde{f} \in \calM$, then $\tilde{f} = \phi_{\calM}(n)$, and hence $\tilde{f}^{*} = \Phi_{\calM}(n)$.
Evaluating $\smap(\tilde{f}^{*})$ requires $\psi_{\calM}(n)$, whose defining Equation~\eqref{eq:def:psi:map:2} reduces to
\begin{align}
	& \psi_{\calM}(n) \: (\xs, \sigma) = \texttt{let }
		\ys = \map \: (x \mapsto f(x, \sigma)) \: xs \nonumber \\
	& \phantom{\psi_{\calM}(n) \: (\xs, \sigma) = \texttt{}} \texttt{in }
		(\ys, \sigma) \,,
\end{align}
and data parallelism can be exploited in evaluating \map.

The second instance of data parallelism arises if a fundamental state thread $(a \times s_n) \rightarrow (b \times s_n)$ operates independently on $a$ and $s_n$.
To see this, let %
$g : a \rightarrow b$ and %
$h : s_n \rightarrow s_n$ be morphisms in \catH.
Again, the universal property of the product can be used to construct a fundametal state thread $g\!\times\!h = \phi_{\calM}(n)$, as in the right pane of Figure~\ref{fig:universal:product}.
Alternatively, $g\!\times\!h$ is characterized by $(g\times h)(x, \sigma) = (g(x), h(\sigma))$.
Now, Equation~\eqref{eq:def:psi:map:2} for the corresponding $\psi_{\calM}(n)$ reduces to
\begin{align}
	& \psi_{\calM}(n) \: (\xs, \sigma) = \texttt{let }
		\ys = \map \: g \: \xs \nonumber \\
	& \phantom{\psi_{\calM}(n) \: (\xs, \sigma) = \texttt{let }}
		\sigma' = (\underbrace{h \circ \cdots \circ h}_{\textit{length}(\xs) \text{ times}}) \: \sigma \nonumber \\
	& \phantom{\psi_{\calM}(n) \: (\xs, \sigma) = \texttt{}} \texttt{in }
		(\ys, \sigma') \,.
		\label{eq:psi:independent:data:parallel}
\end{align}
Again, data parallelism can be exploited in evaluating \map.

Observe that while $\map \: g \: \xs$ in Equation~\eqref{eq:psi:independent:data:parallel} is data-parallel, the values of $\ys$ and $\sigma'$ can be computed in parallel too, which is an instance of task-level parallelism.
%
%
\subsubsection{Task-level parallelism}%
\label{sec:task:level:parallelism}%
The simplest case of task-level parallelism occurs if a state thread $h : (a \times b \times s_I \times s_J) \rightarrow (c \times d \times s_I \times s_J)$ with $I,J\subseteq N$ and $I\cap J = \emptyset$ decomposes into $f : (a \times s_I) \rightarrow (c \times s_I)$ and $g : (b\times s_J) \rightarrow (d\times s_J)$, %
i.e.~$h = f\!\times\!g$ using the same construction and notation as in the right diagram in Figure~\ref{fig:universal:product}.
Then, $h$ can be evaluated by executing $f$ and $g$ in parallel.
Here \smap is not required to arrive at parallelism.

A more interesting case occurs when the underlying category \catH has coproducts, i.e.,~if for any $a, b \in \obj{\catH}$, there exists an object $a + b \in \obj{\catH}$ and natural injections $\inl : a \rightarrow a+b$, $\inr : b \rightarrow a+b$.
Then, consider the following fundamental state threads, together with their extensions to $s_N$,
\out{%
with coproducts which implement conditionals in category-theoretical terms.
	\nr{%
		Conditionals ``only'' fall out as a special case at the end.
		The construction in this subsubsection is more general.
	}%
}%
\out{%
In combination with \smap, they expose task-level parallelism.
	\nr{It is clear that we are talking about task-level parallelism here.}
}%
\begin{align}
	\begin{array}{ll}
	f_1 : a \times s_{n_1} \rightarrow (b + c) \times s_{n_1} \,, &
		f_1^{*} : a \times s_N \rightarrow (b + c) \times s_N \,, \\[2pt]
	f_2 : b \times s_{n_2} \rightarrow b' \times s_{n_2} \,, &
		f_2^{*} : b \times s_N \rightarrow b' \times s_N \,, \\[2pt]
	f_3 : c \times s_{n_3} \rightarrow c' \times s_{n_3} \,, &
		f_3^{*} : c \times s_N \rightarrow c' \times s_N \,, \\[2pt]
	f_4 : (b'+ c') \times s_{n_4} \rightarrow d \times s_{n_4} \,, &
		f_4^{*} : (b'+ c') \times s_N \rightarrow d \times s_N \,.
	\end{array}
\end{align}
From the universal property of the coproduct, we obtain the state thread
\begin{align}
	&[f_2^{*}, f_3^{*}] : (b + c) \times s_N \rightarrow (b' + c') \times s_N \\
	&[f_2^{*}, f_3^{*}] (\inl \: x_b, (\sigma_{n_2}, \sigma_{n_3}, \sigma_{N\setminus\{n_1,n_2\}}))
		= \text{let } (x_b', \sigma_{n_2}') = f_2 (x_b, \sigma_{n_2}) \nonumber \\
	&\phantom{[f_2^{*}, f_3^{*}] (\inl \: x_b, (\sigma_{n_2}, \sigma_{n_3}, \sigma_{N\setminus\{n_1,n_2\}})) =}
		\text{ in } (\inl' \: x_b', (\sigma_{n_2}', \sigma_{n_3}, \sigma_{N\setminus\{n_1,n_2\}})) \\
	&[f_2^{*}, f_3^{*}] (\inr \: x_c, (\sigma_{n_2}, \sigma_{n_3}, \sigma_{N\setminus\{n_1,n_2\}}))
		= \text{let } (x_c', \sigma_{n_3}') = f_3 (x_c, \sigma_{n_3}) \nonumber \\
	&\phantom{[[f_2^{*}, f_3^{*}] (\inr \: x_c, (\sigma_{n_2}, \sigma_{n_3}, \sigma_{N\setminus\{n_1,n_2\}})) =}
		\text{in } (\inr' \: x_c', (\sigma_{n_2}, \sigma_{n_3}', \sigma_{N\setminus\{n_1,n_2\}})) \,.
\end{align}
\out{%
	\nr{%
		Omit this diagram for space reasons and for better page layout.
		Due to the absence of the state components $s_N$ in the diagram, the diagram is only a sketch anyway.
	}%
Figure~\ref{fig:universal:coproduct} gives a simplified version of the relevant universal diagram, where the state components $s_N$ have been omitted.
\begin{figure}
\begin{center}
	{\footnotesize
	\begin{tikzpicture}
		\matrix (m) [matrix of math nodes,row sep=3em,column sep=4em,minimum width=2em]
		{
			b & b  + c  &  c \\
			  & b' + c' &    \\};
		\path[-stealth]
		(m-1-1) edge node [above] {$\inl$} (m-1-2)
				edge node [left]  {$\inl' \circ f_2^{*}$} (m-2-2)
		(m-1-3) edge node [above] {$\inr$} (m-1-2)
				edge node [right] {$\inr' \circ f_3^{*}$} (m-2-2)
		(m-1-2) edge [dashed] node [right] {$[f_2^{*}, f_3^{*}]$} (m-2-2) ;
	\end{tikzpicture}
	}
\end{center}%
	\caption{Universal diagram for the coproduct $b+c$ with the natural injections \inl and \inr.}
	\label{fig:universal:coproduct}%
\end{figure}%
}
We can then form the composed state thread
\begin{align}
	f_4^{*} \circ [f_2^{*}, f_3^{*}] \circ f_1^{*} : a \times s_N \rightarrow d \times s_N \,.
\end{align}

To define how \smap acts on this state thread, we need two helper morphisms, \splt and \join, that use the boolean type $\mathbb{B}$ with values \textbf{T} and \textbf{F}.
The morphism \splt decomposes a list of coproduct values, i.e.~$[b + c]$, into two lists of types $[b]$ and $[c]$ respectively.
This decomposition is completely natural.
However, in defining the inverse operation of \splt, one faces a choice: %
The elements in the lists $[b]$ and $[c]$ can be arranged in different orders to form a list of coproduct values, $[b + c]$.
This choice introduces a source of non-determinism, which must be avoided since \ourLang is meant to be deterministic.
Therefore, \splt and \join operate on an additional data structure, namely a list of booleans, that encodes the order in which \join must form a list $[b + c]$ from the two lists $[b]$, $[c]$.
\begin{align}
& \splt : [b + c] \rightarrow [b] \times [c] \times [\mathbb{B}] \\
& \splt \left([]\right) = ([],[],[]) \\
& \splt \left((\inl \: x_b):\xs\right) = \plet (\bs, \cs, \flags) = \splt \: \xs \nonumber \\
& \phantom{\splt \left((\inl \: x_b):\xs\right) = \texttt{}} \pin (x_b:\bs, \,\cs, \,\textbf{T}:\flags) \\
& \splt \left((\inr \: x_c):\xs\right) = \plet (\bs, \cs, \flags) = \splt \: \xs \nonumber \\
& \phantom{\splt \left((\inr \: x_c):\xs\right) = \texttt{}} \pin (\bs, \,x_c:\cs, \,\textbf{F}:\flags) \\
\nonumber \\
& \join : [b] \times [c] \times [\mathbb{B}] \rightarrow [b + c] \\
& \join \left([],[],[]\right) = [] \\
& \join \left(x_b:\bs, \,\cs, \,\textbf{T}:\flags\right) = (\inl \: x_b) : \join\left(\bs, \cs, \flags\right) \\
& \join \left(\bs, \,x_c:\cs, \,\textbf{F}:\flags\right) = (\inr \: x_c) : \join\left(\bs, \cs, \flags\right)
\end{align}
The action of \smap on $f_4^{*} \circ [f_2^{*}, f_3^{*}] \circ f_1^{*}$ is then defined as follows,
\begin{align}
	&\smap\left( f_4^{*} \circ [f_2^{*}, f_3^{*}] \circ f_1^{*} \right) : [a] \times s_N \rightarrow [d] \times s_N \\
	&\smap\left( f_4^{*} \circ [f_2^{*}, f_3^{*}] \circ f_1^{*} \right) \:
		(\as, (\sigma_{n_1}, \sigma_{n_2}, \sigma_{n_3}, \sigma_{n_4}, \widetilde{\sigma})) = \nonumber \\
	&\quad\quad\text{let }
		(\as', (\sigma_{n_1}', \sigma_{n_2}, \sigma_{n_3}, \sigma_{n_4}, \widetilde{\sigma}))
			= \smap\left( f_1^{*} \right) \: (\as, (\sigma_{n_1}, \sigma_{n_2}, \sigma_{n_3}, \sigma_{n_4}, \widetilde{\sigma})) \nonumber \\
	&\quad\quad\phantom{\text{let }}
		(\bs, \cs, \textcolor{red}{\flags}) = \splt \: \as' \nonumber \\
	&\quad\quad\phantom{\text{let }}
		(\bs', (\sigma_{n_1}', \sigma_{n_2}', \sigma_{n_3}, \sigma_{n_4}, \widetilde{\sigma}))
			= \smap\left( f_2^{*} \right) \: (\bs, (\sigma_{n_1}', \sigma_{n_2}, \sigma_{n_3}, \sigma_{n_4}, \widetilde{\sigma})) \nonumber \\
	&\quad\quad\phantom{\text{let }}
		(\cs', (\sigma_{n_1}', \sigma_{n_2}, \sigma_{n_3}', \sigma_{n_4}, \widetilde{\sigma}))
		= \smap\left( f_3^{*} \right) \: (\cs, (\sigma_{n_1}', \sigma_{n_2}, \sigma_{n_3}, \sigma_{n_4}, \widetilde{\sigma})) \nonumber \\
	&\quad\quad\phantom{\text{let }}
		\ds = \join \: (\bs', \cs', \textcolor{red}{\flags}) \nonumber \\
	&\quad\quad{\text{in }}
		\smap\left( f_4^{*} \right) \: (\ds, (\sigma_{n_1}', \sigma_{n_2}', \sigma_{n_3}', \sigma_{n_4}, \widetilde{\sigma})) \,,
		\label{eq:smap:task:parallelism}
\end{align}
where $\widetilde{\sigma} \in s_{N\setminus\{n_1,n_2,n_3,n_4\}}$.
Note how $\textcolor{red}{\flags}$ is used to ensure determinism by communicating the order of list elements between \splt and \join.
Task-level parallelism can be utilized in Equation~\eqref{eq:smap:task:parallelism} by concurrently executing $\smap(f_2^{*})$ and $\smap(f_3^{*})$, which is possible since there are no dependencies between the data and state components operated on by $f_2^{*}$ and $f_3^{*}$.

A special case of the previous construction is obtained for $b' = d$, $c' = d$, and
\begin{align}
	&f_4 : (d + d) \times s_{n_4} \rightarrow d \times s_{n_4} \\
	&f_4 (\inl \: x_d, \sigma) = (x_d, \sigma) \\
	&f_4 (\inr \: x_d, \sigma) = (x_d, \sigma) \,.
\end{align}
With this $f_4$, $\textit{smap}(f_4^{*} \circ [f_2^{*}, f_3^{*}] \circ f_1^{*})$ yields a task-parallel version of an if-expression.
Conditionals with more than two options are obtained by repeatedly applying the construction from this section.
%
%
%
%
%
\subsection{Proof of Lemma~\ref{thm:lem:Phi:determines:Psi}}%
\label{sec:smap:well:defined}%
Recall that Lemma~\ref{thm:lem:Phi:determines:Psi} states that the functor $\Phi_{\calM}$ fully determines $\Psi_{\calM}$.
As a preliminary step towards establishing this, we derive a recursive formula for $\Psi_{\calM}$.
\begin{lemma}%
\label{thm:lem:Psi:formula}%
Let $w\in\morph{\freeCat{\Delta_{\calM}}}$ be such that no letter of $N$ occurs more than once in $w$.
Let $W$ be the set of letters in $w$, and let $\sigma_{W} \in s_{W}$, $\sigma_{N\setminus W} \in s_{N\setminus W}$.
Then,
\begin{align}
	& \Psi_{\calM}(w) \: (x : \xs, (\sigma_{W}, \sigma_{N\setminus W})) = \texttt{let }
		(y, (\sigma_{W}', \sigma_{N\setminus W}))
			= \Phi_{\calM}(w) \: (x, (\sigma_{W}, \sigma_{N\setminus W})) \nonumber \\
	& \phantom{\Psi_{\calM}(w) \: (x:\xs, (\sigma_{W}, \sigma_{N\setminus W})) = \texttt{let }}
		(\ys, (\sigma_{W}'', \sigma_{N\setminus W}))
			= \Psi_{\calM}(w) \: (\xs, (\sigma_{W}', \sigma_{N\setminus W})) \nonumber \\
	& \phantom{\Psi_{\calM}(w) \: (x:\xs, (\sigma_{W}, \sigma_{N\setminus W})) = \texttt{}} \texttt{in }
		(y:\ys, (\sigma_{W}'', \sigma_{N\setminus W})) \,.
	\label{eq:Psi:formula}
\end{align}
\begin{proof}
By induction on the length of $w$.
For $w = \epsilon_v$, $v\in\obj{\freeCat{\Delta_{\calM}}}$, Equation~\eqref{eq:Psi:formula} holds trivially since $\Psi_{\calM}(\epsilon_v) = \id$ and $\Phi_{\calM}(\epsilon_v) = \id$.
For the induction step, let $w = nw'$ with $n\in N$ and $w'\in\morph{\freeCat{\Delta_{\calM}}}$. 
Let $W'$ be the set of letters in $w'$, and let $\sigma_{W} = (\sigma_{n}, \sigma_{W'})$ with $\sigma_{n} \in s_{n}$, $\sigma_{W'} \in s_{W'}$.
Then,
\begin{flalign}
	&\Psi_{\calM}(nw') \: (x:\xs, (\sigma_{n},\sigma_{W'}, \sigma_{N\setminus W})) = 
		\psi_{\calM}^{*}(n) \circ \Psi_{\calM}(w') \: (x:\xs, (\sigma_{n},\sigma_{W'},\sigma_{N\setminus W})) & 
		\label{eq:Psi:formula:proof:1} \\
	& = \texttt{let }
		(y:\ys, (\sigma_{n},\sigma_{W'}'',\sigma_{N\setminus W})) 
			= \Psi_{\calM}(w') \: (x:\xs, (\sigma_{n},\sigma_{W'},\sigma_{N\setminus W})) \nonumber & \\
	& \phantom{ = }\texttt{ in }
		\psi_{\calM}^{*}(n) \: (y:\ys, (\sigma_{n},\sigma_{W'}'',\sigma_{N\setminus W})) & 
		\label{eq:Psi:formula:proof:2} \\
	& = \texttt{let }		
		(y, (\sigma_{n},\sigma_{W'}',\sigma_{N\setminus W}))
			= \Phi_{\calM}(w') \: (x, (\sigma_{n},\sigma_{W'},\sigma_{N\setminus W})) \nonumber \\
	& \phantom{ = \texttt{ let} }\:
		(\ys, (\sigma_{n},\sigma_{W'}'',\sigma_{N\setminus W}))
			= \Psi_{\calM}(w') \: (\xs, (\sigma_{n},\sigma_{W'}',\sigma_{N\setminus W})) \nonumber \\
	& \phantom{ = }\texttt{ in let }
		(z, (\sigma_{n}',\sigma_{W'}'',\sigma_{N\setminus W}))
			= \phi_{\calM}^{*}(n) (y, (\sigma_{n},\sigma_{W'}'',\sigma_{N\setminus W})) \nonumber \\
	& \phantom{ = \texttt{ in let}}\:
		(\zs, (\sigma_{n}'',\sigma_{W'}'',\sigma_{N\setminus W})) 
			= \psi_{\calM}^{*}(n) \: (\ys, (\sigma_{n}',\sigma_{W'}'',\sigma_{N\setminus W})) \nonumber \\
	& \phantom{ = \texttt{ in}}\:\texttt{in }
		(z:\zs, (\sigma_{n}'',\sigma_{W'}'',\sigma_{N\setminus W})) 
		\label{eq:Psi:formula:proof:3} \\
	& = \texttt{let }		
		(z, (\sigma_{n}',\sigma_{W'}',\sigma_{N\setminus W}))
			= \Phi_{\calM}(nw') \: (x, (\sigma_{n},\sigma_{W'},\sigma_{N\setminus W})) \nonumber \\
	& \phantom{ = \texttt{ let} }\:
		(\zs, (\sigma_{n}'',\sigma_{W'}'',\sigma_{N\setminus W}))
			= \Psi_{\calM}(nw') \: (\xs, (\sigma_{n}',\sigma_{W'}',\sigma_{N\setminus W})) \nonumber \\
	& \phantom{ = }\texttt{ in }
		(z:\zs, (\sigma_{n}'',\sigma_{W'}'',\sigma_{N\setminus W})) \,,
		\label{eq:Psi:formula:proof:4}
\end{flalign}%
where the induction hypothesis was used in going from Equation~\eqref{eq:Psi:formula:proof:2} to Equation~\eqref{eq:Psi:formula:proof:3}.
The manipulation required to go from Equation~\eqref{eq:Psi:formula:proof:3} to Equation~\eqref{eq:Psi:formula:proof:4} is  known as \emph{let floating} in the context of functional language compilers~\cite{PeytonJones:1996:LMB:232627.232630}.
The assumption that no letter occurs more than once in $w=nw'$ is used whenever elements of state objects are decomposed into components and to determine on which of these components $\Phi_{\calM}$ and $\Psi_{\calM}$ act as the identity.
\end{proof}%
\end{lemma}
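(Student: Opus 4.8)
The plan is to establish the recursive formula~\eqref{eq:Psi:formula} by induction on the length of the word $w$, i.e.~on the number of letters it contains. For the base case $w = \epsilon_v$ there is nothing to compute: both $\Psi_{\calM}(\epsilon_v)$ and $\Phi_{\calM}(\epsilon_v)$ are identities by Equations~\eqref{eq:def:Psi:functor:1} and~\eqref{eq:def:Phi:functor:1}, the set of letters is $W = \emptyset$, and the right-hand side of~\eqref{eq:Psi:formula} collapses to returning $(x:\xs, (\sigma_W, \sigma_{N\setminus W}))$, which is exactly what the identity yields on the left.

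For the inductive step I would write $w = n w'$ with $n\in N$ and $w'\in\morph{\freeCat{\Delta_{\calM}}}$, and let $W'$ be the set of letters of $w'$. Because $w$ contains no repeated letter, $n\notin W'$, so I may split $\sigma_W = (\sigma_n, \sigma_{W'})$ into two genuinely disjoint slots. Using $\Psi_{\calM}(nw') = \psi_{\calM}^{*}(n)\circ\Psi_{\calM}(w')$ from Equation~\eqref{eq:def:Psi:functor:2}, I first evaluate $\Psi_{\calM}(w')$ on $(x:\xs, \dots)$ and feed the result into $\psi_{\calM}^{*}(n)$. The induction hypothesis unfolds $\Psi_{\calM}(w')(x:\xs,\dots)$ into a cons $y:\ys$ whose head comes from $\Phi_{\calM}(w')$ applied to $x$ and whose tail comes from a recursive call to $\Psi_{\calM}(w')$ on $\xs$; applying the recursive definition~\eqref{eq:def:psi:map:2} of $\psi_{\calM}(n)$ to this non-empty list then peels off the head with $\phi_{\calM}^{*}(n)$ and the tail with a recursive call to $\psi_{\calM}^{*}(n)$.

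At this point the evaluation is a nested block of let-bindings, and the crux is to re-associate them, the transformation known as \emph{let floating}. I would float the bindings so that the two head computations merge into $\phi_{\calM}^{*}(n)\circ\Phi_{\calM}(w') = \Phi_{\calM}(nw')$ by functoriality (Equation~\eqref{eq:def:Phi:functor:2}) and the two tail computations merge into $\psi_{\calM}^{*}(n)\circ\Psi_{\calM}(w') = \Psi_{\calM}(nw')$, thereby reproducing the right-hand side of~\eqref{eq:Psi:formula} for $w = nw'$.

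The step I expect to be the main obstacle is justifying that this re-association is sound, since floating one binding past another is legal only when the two do not interfere through the shared state. This is precisely where the no-repetition hypothesis is indispensable. Because $n\notin W'$, the map $\phi_{\calM}^{*}(n)$ reads and writes only the $s_n$ component, while $\Phi_{\calM}(w')$ and $\Psi_{\calM}(w')$ touch only the $s_{W'}$ component, and none of them touches $s_{N\setminus W}$; the three state slots are mutually disjoint, so the head and tail computations commute and may be reordered without changing the result. Concretely, $\phi_{\calM}^{*}(n)$ acts as the identity on the $s_{W'}$ slot, so it is immaterial whether it sees the head-updated or the tail-updated value there. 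I would make this disjointness explicit while tracking how each binding rewrites the state tuple $(\sigma_n, \sigma_{W'}, \sigma_{N\setminus W})$, which is the only place where genuine care is required.
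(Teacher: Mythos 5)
Your proposal is correct and follows essentially the same route as the paper's own proof: induction on the length of $w$, splitting off the leading letter $n$ from $w = nw'$, unfolding via the induction hypothesis and the recursive definition of $\psi_{\calM}(n)$, and then re-associating the resulting let-bindings by let floating, justified by the disjointness of the state components $s_n$, $s_{W'}$, and $s_{N\setminus W}$ guaranteed by the no-repetition hypothesis. Your explicit identification of where that hypothesis is indispensable matches the paper's closing remark in its proof.
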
%
\begin{proof}[Proof of Lemma~\ref{thm:lem:Phi:determines:Psi}]
Let $a = \src{w_1} = \src{w_2}$.
Let $\xs\in [a]$ and let $\sigma\in s_N$.
The proof proceeds by induction on the length of \xs.
For $\xs = []$, one finds immediately that $\Psi_{\calM}(w_1) \: ([],\sigma) = ([],\sigma) = \Psi_{\calM}(w_2) \: ([],\sigma)$.
Now, let $\xs = x:\xs'$, with $x\in a$, $\xs'\in [a]$.
From Lemma~\ref{thm:lem:Psi:formula},
\begin{align}
	& \Psi_{\calM}(w_1) \: (x:\xs', \sigma) = \texttt{let }
		(y, \sigma')
			= \Phi_{\calM}(w_1) \: (x, \sigma) \nonumber \\
	& \phantom{\Psi_{\calM}(w_1) \: (x : \xs', \sigma) = \texttt{let }}
		(\ys, \sigma'')
			= \Psi_{\calM}(w_1) \: (\xs, \sigma') \nonumber \\
	& \phantom{\Psi_{\calM}(w_1) \: (x:\xs', \sigma) = \texttt{}} \texttt{in }
		(y:\ys, \sigma'')
		\label{eq:Phi:determines:Psi:proof:1} \\
	& \phantom{\Psi_{\calM}(w_1) \: (x:\xs', \sigma) } = \texttt{let }
		(y, \sigma')
			= \Phi_{\calM}(w_2) \: (x, \sigma) \nonumber \\
	& \phantom{\Psi_{\calM}(w_1) \: (x:\xs', \sigma) = \texttt{let }}
		(\ys, \sigma'')
			= \Psi_{\calM}(w_2) \: (\xs, \sigma') \nonumber \\
	& \phantom{\Psi_{\calM}(w_1) \: (x:\xs', \sigma) = \texttt{}} \texttt{in }
		(y:\ys, \sigma'') 
		\label{eq:Phi:determines:Psi:proof:2} \\
	& \phantom{\Psi_{\calM}(w_1) \: (x:\xs', \sigma)} = 
		\Psi_{\calM}(w_2) \: (x:\xs', \sigma) \,.
		\label{eq:Phi:determines:Psi:proof:3}
\end{align}
Going from Equation~\eqref{eq:Phi:determines:Psi:proof:1} to Equation~\eqref{eq:Phi:determines:Psi:proof:2} uses both the assumption $\Phi_{\calM}(w_1) = \Phi_{\calM}(w_2)$ and the induction hypothesis.
Equation~\eqref{eq:Phi:determines:Psi:proof:3} is arrived at by applying Lemma~\ref{thm:lem:Psi:formula} again.
\end{proof}%

\out{%
Let $n_1, n_2 \in N$ be two letters such that $n_2n_1\in\morph{\freeCat{\Delta^N_{\calO}}}$.
We show that $\Psi_{\calM}(n_2 n_1)$ is fully determined by $\Phi_{\calM}(n_2 n_1)$.
Since $\Delta^N_{\calO}$ is assumed to be a DAG, it follows from $n_2n_1\in\morph{\freeCat{\Delta^N_{\calO}}}$ that $n_1\ne n_2$.
Therefore, the global state $\sigma\in s_N$ can be decomposed as %
$\sigma = (\sigma_{n_1}, \sigma_{n_2}, \widetilde{\sigma})$, %
where $\sigma_{n_1} \in s_{n_1}$, $\sigma_{n_2} \in s_{n_2}$, %
and $\widetilde{\sigma} \in s_{N\setminus\{n_1,n_2\}}$.
Now,
\begin{flalign}
	\Psi_{\calM}(n_2 n_1) \: (x : xs, (\sigma_{n_1}, \sigma_{n_2}, \widetilde{\sigma}))
		& = \left(\psi_{\calM}^{*}(n_2) \circ \psi_{\calM}^{*}(n_1)\right) (x : xs, (\sigma_{n_1}, \sigma_{n_2}, \widetilde{\sigma})) &
		\label{eq:smap:well:defined:pre:1} \\
		& = \text{let }
			(y, (\sigma_{n_1}', \sigma_{n_2}, \widetilde{\sigma})) = \phi_{\calM}^{*}(n_1) \: (x, (\sigma_{n_1}, \sigma_{n_2}, \widetilde{\sigma})) & \nonumber \\			
		&\phantom{ = \text{ let }}
			(ys, (\sigma_{n_1}'', \sigma_{n_2}, \widetilde{\sigma})) = \psi_{\calM}^{*}(n_1) \: (xs, (\sigma_{n_1}', \sigma_{n_2}, \widetilde{\sigma})) & \nonumber \\
		&\phantom{ = \text{ }}\text{in let } 
			(z, (\sigma_{n_1}'', \sigma_{n_2}', \widetilde{\sigma})) = \phi_{\calM}^{*}(n_2) \: (y, (\sigma_{n_1}'', \sigma_{n_2}, \widetilde{\sigma})) & \nonumber \\
		&\phantom{ = \text{ in let }}
			(zs, (\sigma_{n_1}'', \sigma_{n_2}'', \widetilde{\sigma})) = \psi_{\calM}^{*}(n_2) \: (ys, (\sigma_{n_1}'', \sigma_{n_2}', \widetilde{\sigma})) & \nonumber \\
		&\phantom{ = \text{ in }}\text{in }
			(z : zs, (\sigma_{n_1}'', \sigma_{n_2}'', \widetilde{\sigma})) &
			\label{eq:smap:well:defined:pre:2} \\
		& = \text{let }
			(y, (\sigma_{n_1}', \sigma_{n_2}, \widetilde{\sigma})) = \phi_{\calM}^{*}(n_1) \: (x, (\sigma_{n_1}, \sigma_{n_2}, \widetilde{\sigma})) & \nonumber \\
		&\phantom{ = \text{ let }}
			(z, (\sigma_{n_1}', \sigma_{n_2}', \widetilde{\sigma})) = \phi_{\calM}^{*}(n_2) \: (y, (\sigma_{n_1}', \sigma_{n_2}, \widetilde{\sigma})) & \nonumber \\			
		&\phantom{ = \text{ let }}
			(ys, (\sigma_{n_1}'', \sigma_{n_2}', \widetilde{\sigma})) = \psi_{\calM}^{*}(n_1) \: (xs, (\sigma_{n_1}', \sigma_{n_2}', \widetilde{\sigma})) & \nonumber \\
		&\phantom{ = \text{ let }}
			(zs, (\sigma_{n_1}'', \sigma_{n_2}'', \widetilde{\sigma})) = \psi_{\calM}^{*}(n_2) \: (ys, (\sigma_{n_1}'', \sigma_{n_2}', \widetilde{\sigma})) & \nonumber \\
		&\phantom{ = \text{ }}\text{in } 
			(z : zs, (\sigma_{n_1}'', \sigma_{n_2}'', \widetilde{\sigma})) &
			\label{eq:smap:well:defined:pre:3} \\
		& = \text{let }
			(z, (\sigma_{n_1}', \sigma_{n_2}', \widetilde{\sigma})) = \left(\phi_{\calM}^{*}(n_2) \circ \phi_{\calM}^{*}(n_2)\right) (x, (\sigma_{n_1}, \sigma_{n_2}, \widetilde{\sigma})) & \nonumber \\			
		&\phantom{ = \text{ let }}
			(zs, (\sigma_{n_1}'', \sigma_{n_2}'', \widetilde{\sigma})) = \left(\psi_{\calM}^{*}(n_2) \circ \psi_{\calM}^{*}(n_2)\right) (xs, (\sigma_{n_1}', \sigma_{n_2}', \widetilde{\sigma})) & \nonumber \\
		&\phantom{ = \text{ }}\text{in } 
			(z : zs, (\sigma_{n_1}'', \sigma_{n_2}'', \widetilde{\sigma})) &
			\label{eq:smap:well:defined:pre:4} \\
		& = \text{let }
			(z, (\sigma_{n_1}', \sigma_{n_2}', \widetilde{\sigma})) = \Phi_{\calM}^{*}(n_2 n_1) \: (x, (\sigma_{n_1}, \sigma_{n_2}, \widetilde{\sigma})) & \nonumber \\			
		&\phantom{ = \text{ let }}
			(zs, (\sigma_{n_1}'', \sigma_{n_2}'', \widetilde{\sigma})) = \Psi_{\calM}^{*}(n_2 n_1) \: (xs, (\sigma_{n_1}', \sigma_{n_2}', \widetilde{\sigma})) & \nonumber \\
		&\phantom{ = \text{ }}\text{in } 
			(z : zs, (\sigma_{n_1}'', \sigma_{n_2}'', \widetilde{\sigma})) \,. &
			\label{eq:smap:well:defined:pre:5}
\end{flalign}
In establishing Equation~\eqref{eq:smap:well:defined:pre:5}, the fact that $\phi_{\calM}^{*}(n_1)$ and $\psi_{\calM}^{*}(n_1)$ only operate on the state component $\sigma_{n_1}$ (and leave the remaining state components unchanged) has been used repeatedly; 
and analogously for $\phi_{\calM}^{*}(n_2)$ and $\psi_{\calM}^{*}(n_2)$.
Going from Equation~\eqref{eq:smap:well:defined:pre:2} to Equation~\eqref{eq:smap:well:defined:pre:3} is similar to what is known as \emph{let floating} in the context of compilers for functional programming languages~\cite{PeytonJones:1996:LMB:232627.232630}.

Equation~\eqref{eq:smap:well:defined:pre:5} shows that $\Psi_{\calM}(n_2 n_1)$ is indeed fully determined by $\Phi_{\calM}(n_2 n_1)$.
The argument generalizes to words $w\in\morph{\freeCat{\Delta^N_{\calO}}}$ of arbitrary lengths, i.e,.~$\Phi_{\calM}(w)$ fully determines $\Psi_{\calM}(w)$.
Stated differently, for $w_1, w_2\in\morph{\freeCat{\Delta^N_{\calO}}}$ we have
\begin{align}
	\Phi_{\calM}(w_1) = \Phi_{\calM}(w_2) \:\Rightarrow\: \Psi_{\calM}(w_1) = \Psi_{\calM}(w_2) \,,
	\label{eq:smap:well:defined}
\end{align}
and this means that $\smap$ is well-defined.
}

\out{%
\begin{figure*}
{\footnotesize
\begin{center}
\begin{align*}
	\left.
	\begin{array}{l|cccc}
	& \sigma_{n_1} & \sigma_{n_2} & \dots & \sigma_{n_k} \\
	\hline \\[-2mm]
	{}[x_1, x_2, \dots, x_p]
	& \bigg\downarrow \parbox{1.8cm}{$\psi_{\calM}^{*}(n_1) \\
                                      = \left(\phi_{\calM}(n_1)\right)^{[p]}$} 
	&
	&
	& \\[4mm]
	{}[x_1^{(1)}, x_2^{(1)}, \dots, x_p^{(1)}]
	& 
	& \bigg\downarrow \parbox{1.8cm}{$\psi_{\calM}^{*}(n_2) \\
                                      = \left(\phi_{\calM}(n_2)\right)^{[p]}$}
	&
	& \\[4mm]
	{}[x_1^{(2)}, x_2^{(2)}, \dots, x_p^{(2)}]
	& 
	& 
	&
	& \\[4mm]	
	\vdots
	&
	& 
	& \ddots
	& \\[4mm]
	{}[x_1^{(k-1)}, x_2^{(k-1)}, \dots, x_p^{(k-1)}]
	&
	&
	& 
	& \bigg\downarrow \parbox{2.0cm}{$\psi_{\calM}^{*}(n_k) \\
		                              = \left(\phi_{\calM}(n_k)\right)^{[p]}$} \\[4mm]
	{}[x_1^{(k)}, x_2^{(k)}, \dots, x_p^{(k)}]
	&
	&
	& 
	&
	\end{array}
	\hspace{4mm}\right\} \hspace{2mm} \parbox{2cm}{\huge $\Psi_{\calM}(w_1)$}
\end{align*}
\vspace{2mm} \\
{\huge $ || $}
\vspace{2mm} \\
\begin{align*}
	\left.
	\begin{array}{l|cccc}
	& \sigma_{n_1} & \sigma_{n_2} & \dots & \sigma_{n_k} \\
	\hline \\[-2mm]
	{}[x_1, x_2, \dots, x_p]
	& \bigg\downarrow \phi_{\calM}(n_1) 
	&
	&
	& \\[4mm]
	{}[x_1^{(1)}, x_2, \dots, x_p]
	& 
	& \bigg\downarrow \phi_{\calM}(n_2)
	&
	& \\[4mm]
	{}[x_1^{(2)}, x_2, \dots, x_p]
	& 
	&
	&
	& \\[4mm]
	\vdots
	{\phantom{{}[x_1^{(k)}, x_2^{(k)}, \dots, x_{n-1}^{(k)}, x_p^{(k-1)}]}}
	& 
	&
	& \ddots
	& \\[4mm]
	{}[x_1^{(k-1)}, x_2, \dots, x_p]
	& 
	& 
	&
	& \bigg\downarrow \phi_{\calM}(n_k) \\[4mm]
	{}[x_1^{(k)}, x_2, \dots, x_p]
	& 
	& 
	&
	&
	\end{array}
	\hspace{4mm}\right\} \hspace{2mm} \parbox{2cm}{\huge $\Phi_{\calM}(w_1)$}
\end{align*}
\\
$ \vdots $
\\
\begin{align*}
	\left.
	\begin{array}{l|cccc}
	{}[x_1^{(k)}, x_2^{(k)}, \dots, x_{p-1}^{(k)}, x_p]
	& \bigg\downarrow \phi_{\calM}(n_1) 
	&
	&
	& \\[4mm]
	{}[x_1^{(k)}, x_2^{(k)}, \dots, x_{p-1}^{(k)}, x_p^{(1)}]
	& 
	& \bigg\downarrow \phi_{\calM}(n_2)
	&
	& \\[4mm]
	{}[x_1^{(k)}, x_2^{(k)}, \dots, x_{p-1}^{(k)}, x_p^{(2)}]
	& 
	&
	&
	& \\[4mm]
	\vdots
	{\phantom{{}[x_1^{(k)}, x_2^{(k)}, \dots, x_{p-1}^{(k)}, x_p^{(k-1)}]}}
	& 
	&
	& \ddots
	& \\[4mm]
	{}[x_1^{(k)}, x_2^{(k)}, \dots, x_{p-1}^{(k)}, x_p^{(k-1)}]
	& 
	& 
	&
	& \bigg\downarrow \phi_{\calM}(n_k) \\[4mm]
	{}[x_1^{(k)}, x_2^{(k)}, \dots, x_{p-1}^{(k)}, x_p^{(k)}]
	& 
	& 
	&
	&	
	\end{array}
	\hspace{4mm}\right\} \hspace{2mm} \parbox{2cm}{\huge $\Phi_{\calM}(w_1)$}
\end{align*}
\end{center}
}
\caption{%
	Proof of the fact that $\Phi_{\calM}(w_1)$ fully determines $\Psi_{\calM}(w_1)$.
	From this it follows that %
	$\Phi_{\calM}(w_1) = \Phi_{\calM}(w_2) \:\Rightarrow\: \Psi_{\calM}(w_1) = \Psi_{\calM}(w_2)$.
	Being able to exchange the order in which the $\phi_{\calM}(n_1), \dots, \phi_{\calM}(n_k)$ are applied requires that none of the letters $n_1,\dots,n_k$ of $w_1$ occur more than once in $w_1$.
	This is true of all words in \morph{\freeCat{\Delta^N_{\calO}}} if $\Delta^N_{\calO}$ is a DAG.
}
\end{figure*}
}

\out{%
	\nr{%
		The current presentation allows arbitrary graphs up to this point.
		Therefore, the discussion in the following two paragraphs is not relevant here, and it can probably be removed from the paper altogether.
	}%
This has a consequence that have not been looked at so far:
the set \calM cannot contain any morphisms $f : (a\times s_n) \rightarrow (a\times s_n)$ since this would require the presence of a loop in $\Delta^N_{\calO}$.
To still include morphisms like $f : (a\times s_n) \rightarrow (a\times s_n)$ in our programming model, two options exist:
(i) one could consider the elements of $\calO$ as labels on the independently existing vertices of $\Delta^N_{\calO}$.
Then, different vertices can have the same label, and hence edges $a\rightarrow a$ are possible without introducing loops into the DAG.
(ii) Alternatively, one can allow elements of \calO to be type aliases of each other.
Thus, if $a, b$ are aliases for the same type, then an edge $a\rightarrow b$ in $\Delta^N_{\calO}$ corresponds to a morphism $a\rightarrow a$ under the aliasing relationship between the types $a$ and $b$.

The discussion so far has not made use of the acyclicity of $\Delta^N_{\calO}$.
In fact, the theory can be developed equally well when cycles exist.
However, as we will see in the next section, it is precisely the absence of cycles that allows for an implementation that takes advantage of pipeline parallelism.
}

\out{%
	\nr{%
		The following paragraphs have been commented out since they are left over from an earlier version of this draft section on category theory.
		The current version of the present section now uses a slightly different formalism, and the following paragraphs can probably be removed from the paper altogether.
	}%
There is a natural extension of $\phi_{\calM}$ to $\morph{\freeCat{\Delta^N}}$:
\begin{align}
	\phi_{\calM}(n_k n_{k-1} \dots n_2 n_1) =
		\phi_{\calM}n_k n_{k-1} \dots n_2 n_1
\end{align}
i.e.~$\calM \cong E(\Delta^N) \cong N$.
Moreover, let $\calO \subseteq \obj{\catH}$.
\calO must be compatible with \calM in a way that will become apparent in a moment.

Now, let \CCat{\calO,\calM}{\Delta^N} be a category that is isomorphic to \freeCat{\Delta^N} and has the following objects and morphisms.
\begin{align}
	&& \obj{\CCat{\calO,\calM}{\Delta^N}} =
		\left\{ o\times s_N \mid o\in\calO \right\} \\
	&& \morph{\CCat{\calO,\calM}{\Delta^N}} =
		\{ (a,b,f,I) \mid \:
			& a,b\in\calO,
			  f\in\calM,
			  I \in\morph{\freeCat{\Delta^N}}, \\
	&&		& \src{f} = a\times s_I,
			  \tgt{f} = b\times s_I \}		
\end{align}
Note that morphisms are tuples $(a,b,f,I)$ where $a$, $b$, and $I$ encode the source and target of $f\subseteq \morph{\catS}$, i.e.~$f : (a\times s_I) \rightarrow (b\times s_I)$.
We have identified words over $N$ with subsets of $N$.
This is meaningful since there are no cycles in $\Delta^N$ and hence no letter $n\in N$ can appear more than once in $I\in\morph{\freeCat{\Delta^N}}$.
The source and target of a morphism in \CCat{\calO,\calM}{\Delta^N} are defined as follows:
\begin{align}
	\src{(a,b,f,I)} = a\times s_N, \\
	\tgt{(a,b,f,I)} = b\times s_N.
\end{align}
Composition of $(a,b,f,I)$ and $(b,c,g,J)$ is defined as follows:
\begin{align}
	(b,c,g,J) \circ (a,b,f,I) = (a,c,h,JI) ,
\end{align}
where $JI$ is the word formed by concatenation of $J$ and $I$, and the morphism $h$ is defined by means of composition in \catH and the following diagram.
For $a\times s_N \in \obj{\CCat{\calO,\calM}{\Delta^N}}$, the identity morphism is given by $(a,a,\id_{a\times s_{\emptyset}}, \epsilon)$.
}

\begin{acks}                            
This work was supported in part by the German Research Foundation (DFG) within the Collaborative Research Center HAEC and the Center for Advancing Electronics Dresden (cfaed).
\end{acks}

\bibliography{io}

\end{document}